\newcommand{\ceil}[1]{\left\lceil #1\right\rceil}
\newcommand{\ang}[1]{\langle #1\rangle}
\newcommand{\ZZ}{\mathbb{Z}}    
\newcommand{\RS}{\mathbb{R}} 
\newcommand{\PRE}{\mathbb{R}^+} 
\newcommand{\ropt}{\rho^{\rm opt}_{\infty}} 
\newcommand{\rpar}{\rho^{\rm par}_{\infty}} 
\DeclareMathOperator{\suchthat}{\hspace{0.35em}| \hspace{0.35em}} 
\newcommand{\initpos}{p^{\vdash}}
\newcommand{\goalpos}{p^{\dashv}}
\newcommand{\inittime}{t^{\vdash}}
\newcommand{\goaltime}{t^{\dashv}}
\newcommand{\speedlimit}{\delta_{max}}
\newcommand{\TRUE}{{\bf true}}
\newcommand{\FALSE}{{\bf false}}
\title{On the Complexity of an Unregulated Traffic Crossing%
\thanks{Supported by the National Science Foundation under grant CCF-1117259 and the Office of Naval Research under grant N00014-08-1-1015.}}
\author{Philip Dasler and David M. Mount}
\institute{Department of Computer Science \\
		University of Maryland \\
		College Park, Maryland 20742 \\
		\texttt{\{daslerpc,mount\}@cs.umd.edu}}
\begin{document}
\maketitle
\begin{abstract}
The steady development of motor vehicle technology will enable cars of the near future to assume an ever increasing role in the decision making and control of the vehicle itself. In the foreseeable future, cars will have the ability to communicate with one another in order to better coordinate their motion. This motivates a number of interesting algorithmic problems. One of the most challenging aspects of traffic coordination involves traffic intersections. In this paper we consider two formulations of a simple and fundamental geometric optimization problem involving coordinating the motion of vehicles through an intersection.

We are given a set of $n$ vehicles in the plane, each modeled as a unit length line segment that moves monotonically, either horizontally or vertically, subject to a maximum speed limit. Each vehicle is described by a start and goal position and a start time and deadline. The question is whether, subject to the speed limit, there exists a collision-free motion plan so that each vehicle travels from its start position to its goal position prior to its deadline.

We present three results. We begin by showing that this problem is $\NP$-complete with a reduction from 3-SAT. Second, we consider a constrained version in which cars traveling horizontally can alter their speeds while cars traveling vertically cannot. We present a simple algorithm that solves this problem in $O(n \log n)$ time.  Finally, we provide a solution to the discrete version of the problem and prove its asymptotic optimality in terms of the maximum delay of a vehicle.
\end{abstract}

\section{Introduction}
As autonomous and semi-autonomous vehicles become more prevalent, there is an emerging interest in algorithms for controlling and coordinating their motions to improve traffic flow. The steady development of motor vehicle technology will enable cars of the near future to assume an ever increasing role in the decision making and control of the vehicle itself. In the foreseeable future, cars will have the ability to communicate with one another in order to better coordinate their motion. This motivates a number of interesting algorithmic problems. One of the most challenging aspects of traffic coordination involves traffic intersections. In this paper we consider two formulations of a simple and fundamental geometric optimization problem involving coordinating the motion of vehicles through an intersection.

Traffic congestion is a complex and pervasive problem with significant economic ramifications. Practical engineering solutions will require consideration of myriad issues, including the physical limitations of vehicle motion and road conditions, the complexities and dynamics of traffic and urban navigation, external issues such as accidents and break-downs, and human factors. We are motivated by the question of whether the field of algorithm design can contribute positively to such solutions. We aim to identify fundamental optimization problems that are simple enough to be analyzed formally, but realistic enough to contribute to the eventual design of actual traffic management systems. In this paper, we focus on a problem, the \emph{traffic crossing problem}, that involves coordinating the motions of a set of vehicles moving through a system of intersections. In urban settings, road intersections are \emph{regulated} by traffic lights or stop/yield signs. Much like an asynchronous semaphore, a traffic light locks the entire intersection preventing cross traffic from entering it, even when there is adequate space to do so. Some studies have proposed a less exclusive approach in which vehicles communicate either with one another or with a local controller that allows vehicles, possibly moving in different directions, to pass through the intersection simultaneously if it can be ascertained (perhaps with a small adjustment in velocities) that the motion is collision-free (see, e.g., \cite{dresner_multiagent_2008}). Even though such systems may be beyond the present-day automotive technology, the approach can be applied to controlling the motion of parcels and vehicles in automated warehouses \cite{wurman_coordinating_2008}.

Prior work on autonomous vehicle control has generally taken a high-level view (e.g., traffic routing \cite{dantzig_truck_1959, clarke_scheduling_1964, solomon_algorithms_1987,yu_multi-agent_2012}) or a low-level view (e.g., control theory, kinematics, etc. \cite{fenton_steering_1976, rajamani_vehicle_2011}). We propose a mid-level view, focusing on the control of vehicles over the course of minutes rather than hours or microseconds, respectively. The work by Fiorini and Shiller on velocity obstacles \cite{fiorini_motion_1998} considers motion coordination in a decentralized context, in which a single agent is attempting to avoid other moving objects. Much closer to our approach is work on \emph{autonomous intersection management} (AIM) \cite{au_motion_2010, carlino_auction-based_2013, dresner_multiagent_2004, dresner_multiagent_2005, dresner_multiagent_2008,%
vanmiddlesworth_replacing_2008}.
This work, however, largely focuses on the application of multi-agent techniques and deals with many real-world issues. As a consequence, formal complexity bounds are not proved. Berger and Klein consider a dynamic motion-panning problem in a similar vein to ours, which is loosely based on the video game \emph{Frogger} \cite{berger_travellers_2010}. Their work is based, at least in part, on the work of Arkin, Mitchell, and Polishchuk \cite{Arkin:2008} in which a group of circular agents must cross a field of polygonal obstacles.  These obstacles are dynamic, but their motion is fixed and known \emph{a priori}.

We consider a simple problem formulation of the traffic crossing problem, but one that we feel captures the essential computational challenges of coordinating crosswise motion through an intersection. Vehicles are modeled as line segments moving monotonically along axis-parallel  lines (traffic lanes) in the plane. Vehicles can alter their speed, subject to a maximum speed limit, but they cannot reverse direction. The objective is to plan the collision-free motion of these segments as they move to their goal positions.

After a formal definition of our traffic crossing problem in Section~\ref{sec:prob_def}, we present three results. First, we show in Section~\ref{sec:hardness} that this problem is $\NP$-complete. (While this is a negative result, it shows that this problem is of a lower complexity class than similar PSPACE-complete motion-planning problems, like sliding-block problems \cite{hearn_sliding_2005}.) Second, in Section~\ref{sec:one_sided_solution} we consider a constrained version in which cars traveling vertically travel at a fixed speed. This variant is motivated by a scenario in which traffic moving in one direction (e.g., a major highway) has priority over crossing traffic (e.g., a small road). We present a simple algorithm that solves this problem in $O(n \log n)$ time.  Finally, we consider the problem in a discrete setting in Section~\ref{sec:discrete_setting}, which simplifies the description of the algorithms while still capturing many of the interesting scheduling elements of the problem.  As part of this consideration, we provide a solution to the problem that limits the maximum delay of any vehicle and prove that this solution is asymptotically optimal.

\vspace*{-8pt}

\section{Problem Definition} \label{sec:prob_def}

The Traffic Crossing Problem is one in which several vehicles must cross an intersection simultaneously.  For a successful crossing, all vehicles must reach the opposite side of the intersection without colliding, and they must do so in a reasonable amount of time. 
This time-based restriction exists to encourage an improvement in efficiency over the traffic light regulated crossing.  Here, a ``reasonable amount of time'' is short enough that the traffic cannot simply take turns crossing the intersection (i.e., using the manner in which a traffic light regulates intersections) but instead forces some amount of simultaneity.  

The Traffic Crossing Problem can be posed either as one of optimization (e.g., how quickly can all the cars get across without colliding) or as a decision problem (e.g., can all vehicles cross, collision-free, within a particular time limit).  Here, it is treated as a decision problem so that its parallels to problems like that of Satisfiability can be more easily illustrated.

Formally, a traffic crossing is defined as a tuple $C = (V, \speedlimit)$.  This tuple is comprised of a set of $n$ vehicles $V$ which exist in $\RS^2$ and a global speed limit $\speedlimit \in \PRE$, where $\PRE$ denotes the set of nonnegative reals.  Each vehicle is modeled as a vertical or horizontal open line segment that moves parallel to its orientation.  Like a car on a road, each vehicle moves monotonically, but its speed may vary between zero and the speed limit.  A vehicle's position is specified by its leading point (relative to its direction).  

Each vehicle $v_i \in V$ is defined as a set of properties, $v_i = \{l_i, \initpos_i, \goalpos_i, \inittime_i, \goaltime_i\}$\footnote{The notational use of $\vdash$ and $\dashv$ set above a variable (e.g., $\alpha^{\vdash}$) represent the beginning and end of a closed interval, respectively (e.g., start and end times).}, defined as follows:
\begin{description}
\item[$l_i$:] The length of the vehicle's line segment.
\item[$\initpos_i$:] The starting position of the vehicle, i.e., the vehicle's position prior to its start time (see below).  The position is defined as a point and represents the leading edge of the vehicle.
\item[$\goalpos_i$:] The goal position of the vehicle. The vehicle is considered to have successfully crossed the intersection if its leading point reaches this position either on or before its deadline (see below). 
\item[$\inittime_i$:] The starting time of the vehicle.  The vehicle may not move prior to this time.
\item[$\goaltime_i$:] The deadline for the vehicle.  This is an absolute point in time by which the vehicle must reach its goal position.
\end{description}

The set $V$ and the global speed limit $\speedlimit$ define the problem and remain invariant throughout.  Our objective is to determine whether there exists a collision-free motion of the vehicles that respects the speed limit and satisfies the goal deadlines.  Such a motion is described by a set of functions, called speed profiles, that define the instantaneous speed of the vehicles at time $t$.

\begin{figure}[htbp]
\centerline{ \includegraphics[scale=.35]{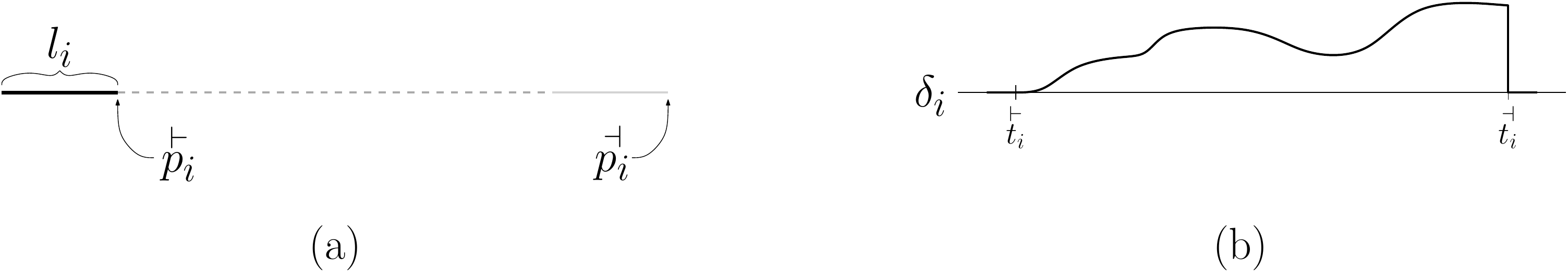}}
\caption{(a) The physical specification of a vehicle $v_i$.  (b) A possible speed profile, $\delta_i$, for a vehicle $v_i$.}
\label{fig:vehicle_definition}
\end{figure}

Formally, this set of functions is defined as $D=\{\delta_i(t) \suchthat i \in[1, n], \forall t, 0 \leq \delta_i(t) \leq \speedlimit\}$.  A vehicle's direction of travel is a unit length vector $d_i$ directed from its initial position to its goal.   Given its speed profile, the position of a vehicle at time $t$ is $p_i(t) =  \initpos_i + d_i\left[\int_0^t \delta_i(x) dx \right]$ and the vehicle $v_i$ inhabits the open line segment between $p_i(t)$ and $p_i(t)-d_i l_i$, which we denote by $\sigma_i(t)$.

\begin{subequations}
A set $D$ of speed profiles is valid if:
\begin{align}
\label{eq:outside_delta} &\forall t \notin \left[\inittime_i, \goaltime_i\right] \hspace{0.5em} \delta_i(t) = 0 \\
\label{eq:inside_delta} &\forall t \in \left[\inittime_i, \goaltime_i\right] \hspace{0.5em} \delta_i(t) \in \left[0, \speedlimit\right] \\
\label{eq:no_intersection} &\forall t \text{ and } \forall v_j, v_i \in V: v_j \neq v_i, \hspace{0.5em} \sigma_i(t) \cap \sigma_j(t) = \emptyset \\
\label{eq:reach_goal} &p_i(\goaltime_i) = \goalpos_i 
\end{align}
\end{subequations}

Equation (\ref{eq:outside_delta}) states that the vehicle may not move either prior to its start time nor after its deadline has passed.  Equation (\ref{eq:inside_delta}) enforces the speed limit and prevents vehicles from traveling in reverse.  Equation (\ref{eq:no_intersection}) prohibits collisions. Equation (\ref{eq:reach_goal}) enforces the goal condition. 

A traffic crossing $C$ is solvable if there exists a valid set of speed profiles $D$.
\vspace*{-5pt}

\section{Hardness of Traffic Crossing}\label{sec:hardness}
Determining whether a given instance of the traffic crossing problem is solvable is $\NP$-complete.  We show its $\NP$-hardness by proving the following theorem:

\begin{theorem}
Given a Boolean formula $F$ in 3-CNF, there exists a traffic crossing $C = (V, \delta)$, computable in polynomial time, such that $F$ is satisfiable if and only if there exists a valid set of speed profiles $D$ for $C$.
\end{theorem}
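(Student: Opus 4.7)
My plan is a polynomial-time reduction from 3-SAT. The construction uses three kinds of gadgets: \emph{variable} gadgets that encode each Boolean variable as a binary timing decision, \emph{clause} gadgets that are feasible only when at least one incident literal is satisfied, and \emph{wiring} that propagates a variable's choice to every clause in which it appears.

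For each variable $x_i$ I would introduce a horizontal selector vehicle $s_i$ with a tight deadline. Pairs of vertical blocker vehicles with their own tight deadlines are placed at two checkpoints along the selector's lane, leaving that lane free only during two very short intervals: an ``early'' window and a ``late'' window. Because $s_i$'s deadline forbids idling and the speed limit $\speedlimit$ forbids unbounded acceleration, $s_i$ can meet its deadline only by passing through exactly one of the two windows; the early choice encodes $x_i = \text{true}$ and the late choice $x_i = \text{false}$. Any speed profile that places $s_i$ at a checkpoint outside the designated window collides with one of the blocker sets.

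For each clause $C_j = l_{j1} \vee l_{j2} \vee l_{j3}$ I would introduce a vertical clause vehicle $c_j$ that must cross three horizontal lanes, one per literal. On the $k$-th lane, an auxiliary blocker occupies the crossing point at $c_j$'s nominal arrival time when $l_{jk}$ is false under the selector's assignment, and clears it when $l_{jk}$ is true. The start time and deadline of $c_j$ are calibrated so that its schedule has just enough slack to wait out \emph{one} blocked lane (by slowing before the blockage and accelerating afterward), but not enough to absorb two or three blockages. Fan-out from $s_i$ to the auxiliary blockers in every clause containing $x_i$ is achieved by chains of dependent vehicles whose deadlines lock them in phase with $s_i$, so that a single binary choice propagates consistently across the gadgets.

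Correctness would follow in both directions: a satisfying assignment of $F$ yields valid speed profiles (each $s_i$ takes its designated window and each $c_j$ waits for a true literal), and conversely any valid set of profiles reveals a binary choice at each $s_i$ and a feasible traversal of every $c_j$, which together give an assignment satisfying each clause. The construction uses $O(n + m)$ vehicles with coordinates and times polynomial in the input size. The main obstacle I anticipate is calibrating the variable gadget so that \emph{only} the two extreme timings are feasible: ruling out ``sneak-through'' profiles that stall inside the corridor or skim a blocker's edge requires tight coupling of the window widths, blocker lengths, and selector deadline. A secondary challenge is preventing unintended interactions between gadgets laid out in the shared plane, which I would address by placing them in spatially disjoint regions joined by long idle lanes and by scaling $\speedlimit$ so that distance-based timing constraints dominate any local slack.
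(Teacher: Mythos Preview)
Your high-level plan --- a polynomial-time 3-SAT reduction via variable, clause, and wiring gadgets --- is the right strategy, and it is the strategy the paper uses. But your clause gadget has the wrong threshold. You say $c_j$ has ``just enough slack to wait out \emph{one} blocked lane \ldots\ but not enough to absorb two or three blockages.'' Under that calibration $c_j$ is feasible iff at most one of its literal lanes is blocked, i.e., iff at least \emph{two} literals are true. A 3-CNF clause, however, is satisfied as soon as \emph{one} literal is true; with your gadget, any satisfying assignment in which some clause has exactly one true literal (hence two blocked lanes) produces an infeasible traffic instance, and the forward direction of the equivalence fails. The fix is simple to state --- give $c_j$ slack for up to two blockages but not three --- but it is not what you wrote, and the calibration of deadlines and blocker timings throughout the gadget depends on it.

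For comparison, the paper's construction differs in its mechanics. A variable is encoded not by a single selector but by a \emph{pair} of vehicles, forced by perpendicular ``helper'' pairs into opposite delay-first/delay-last policies; this pairing is also how values are copied and routed, and an S-shaped detour gadget realigns arrival times. Each clause is checked by splitting it into its positive and negative literals: each half gets a verifier vehicle that is forced to delay a full five units precisely when none of its literals hold, and the two verifiers are placed so that they collide iff both take the full delay. This two-halves design sidesteps the two-versus-three threshold you need on a single lane, at the price of more machinery. The paper's pairing mechanism also addresses what you identify as your main obstacle --- ruling out intermediate timings --- because the perpendicular helper pairs force a strict $0$-or-$1$ delay at the moment of interaction (Lemma~\ref{lemma:veh_limit}), rather than relying on a windowed deadline to exclude sneak-through profiles.
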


The input to the reduction is a boolean formula $F$ in 3-CNF (i.e., an instance of 3-SAT).  Let $\{z_1, \ldots, z_n\}$ denote its variables and $\{c_1, \ldots, c_m\}$ denote its clauses. Each variable $z_i$ in $F$ is represented by a pair of vehicles whose motion is constrained to one of two possible states by intersecting their paths with a perpendicular pair of vehicles. This constraining mechanism (seen in Fig.~\ref{fig:copying_values}) is the core concept around which all mechanisms in the reduction are built.  It allows us to represent logical values, to transmit these values throughout the construction, and to check these values for clause satisfaction.

\begin{figure}[htbp]
\vspace*{-10pt}

\centerline{\includegraphics[scale=.35]{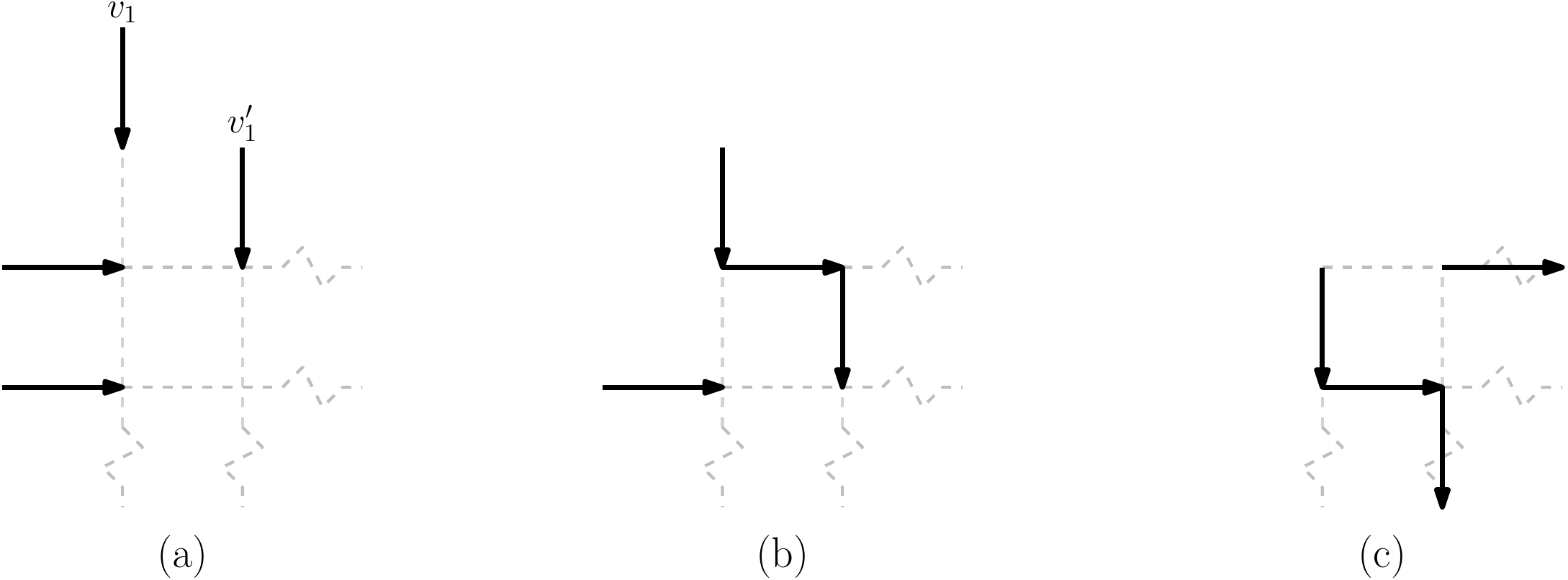}}
\caption{(a) An example of transferring values at $\inittime_i$.  $v_1$ and $v_1'$ are {\TRUE} and {\FALSE}, respectively.  (b) At time $\inittime_i + 1$, the upper horizontal vehicle will take on the value of $v_1'$ while the lower takes the value of $v_1$.  (c) Time $=\inittime_i + 2$. }
\label{fig:copying_values}
\end{figure}

\vspace*{-5pt}

All vehicles in the reduction are of unit length and (barring a few special cases) their deadlines are set so that they can reach their goal position with at most one unit time delay.  More formally, $\goaltime_i - \inittime_i - 1 = \frac{(\|\goalpos_i - \initpos_i\|)}{\speedlimit}$.  In general, the delay may take multiple forms (e.g., the vehicle could take a delay of $1$ at any point during its travel or spread the delay out by traveling slower than $\speedlimit$), but the mechanism described above constrains the delay to only one of two types: a delay of exactly $0$ or $1$ taken immediately at the vehicle's start time. 

For each clause $c_i \in F$, a mechanism is created that forces a collision if, and only if, all three literals are {\FALSE}.  This mechanism checks the positive and negative literals separately, then combines the results in order to determine whether the clause is satisfied.

These mechanisms each require only a constant number of vehicles, resulting in a reduction with a  complexity on the order of $O(n + m)$, where $n$ and $m$ are the number of variables and clauses, respectively. 

What follows is a detailed description of these mechanisms and a formal proof that the Traffic Crossing Problem is $\NP$-complete.
 
\subsection{Variable Representation}
\begin{remark} \label{rem:motion_vis}
As a brief aside, we first begin with an explanation of the conventions used to convey motion over time in the figures throughout this paper.

Since vehicle dynamics and timing differences are difficult to understand in static images, some visualization conventions are used throughout this paper to convey these time dependent properties.  First, delays in a vehicle's movement are visualized by displacing the vehicle by a distance equivalent to the delay.  For example, a vehicle placed $1$ distance unit behind its starting position represents a delay of $1$ time unit (see Fig.~\ref{fig:motion_visualization}(a)).  This positional change is equivalent to a $1$ unit delay as it takes the vehicle this long to reach its original position when traveling at the maximum speed (which the vehicles must do in order to reach their goals in time).  Additionally, we can visualize a vehicle's path in relation to another vehicle traveling perpendicularly by projecting one of the vehicles along the resultant vector of their combined motions (thus the shaded region in Fig.~\ref{fig:motion_visualization}(b)).  The deflection of this band is based on the ratio of the two vehicles' speeds.  If this shaded band intersects both vehicles then a collision will occur between them.  

\begin{figure}[htbp]
\centerline{\includegraphics[scale=.45]{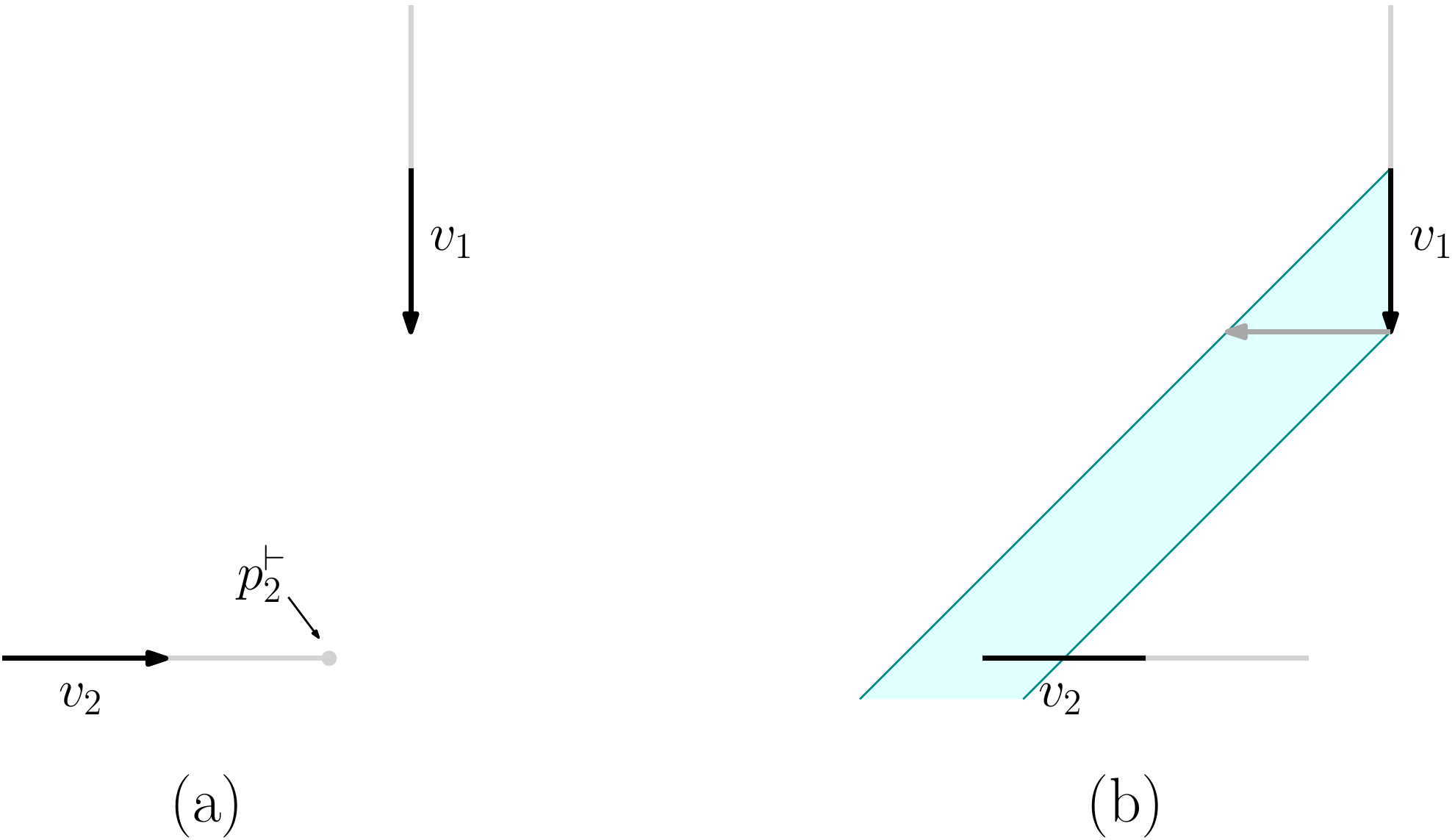}}
\caption{(a) A pair of vehicles traveling toward each other.  Both $v_1$ and $v_2$ have an allowable delay of $1$ time unit, though only $v_2$ has actually done so.  (b) The motion of $v_1$ projected forward in time.  Notice that a collision will occur with $v_2$, which would have been avoided if $v_2$ had not delayed. }
\label{fig:motion_visualization}
\end{figure}

With this understanding of how motion is visualized throughout this paper, we continue on to describe the variable representation gadget.
\end{remark}

Each variable $z_i$ is represented by a pair of vehicles that encode the truth values for both the variable and its negation.  The vehicles in this pair, referred to as \emph{value vehicles}, travel downward in a coordinated manner along two vertical lines that are separated by the unit distance. As mentioned above, a vehicle can endure a delay in the interval $[0, 1]$.  Through the use of additional vehicles, when and how this delay occurs will be constrained.  The vehicles ``carry'' a truth value based on their movement through the system, and doing so requires limiting the vehicles' movements to one of two states. In particular, each value vehicle can either delay for $1$ time unit and then proceed at full speed to the goal or proceed at full speed directly to its final destination, arriving $1$ time unit before its deadline.  These two movement types will be referred to as delay-first ({\TRUE}) and delay-last ({\FALSE}) policies, respectively.  

In order to constrain the delay policies of the value vehicles, two pairs of helper vehicles are added.  The first pair forces one (and only one) of the value vehicles to incur an immediate unit delay.  The second helper pair forces the remaining value vehicle to delay only at the end of its path. 

The helper vehicles in a pair travel together horizontally, are separated vertically by the unit distance, and are placed so that they intersect the value vehicles' paths.  Their goal positions, start times, and end times are all set so that an interaction occurs between them and the value vehicles.

Thus, let $(x, y)$ and $(x+1, y)$ denote the positions of the leading points of a value vehicle pair $(v_1, v_1')$ at time $t$ (see Fig.~\ref{fig:helper_vehicles}).  Place a helper vehicle pair $(u_1', u_1)$ at $(x, y)$ and $(x, y+1)$, respectively, set their goal positions to $(x+2, y)$ and $(x+2, y+1)$, their start times to $t$, and their deadlines to $t+3$.  Similarly, place another helper vehicle pair $(w_1', w_1)$ at $(x, y + \Delta)$ and $(x, y+1+\Delta)$, respectively, set their goal positions to $(x+2, y+\Delta)$ and $(x+2, y+1+\Delta)$, their start times to $t+\Delta$, and their deadlines to $t+3+\Delta$.  The value of $\Delta$ is, essentially, arbitrary and is used here to illustrate that the distance traveled by the value vehicles does not affect their selection of and adherence to one of the two prescribed movement policies.

\begin{lemma} \label{lemma:veh_limit}
Given the pairs $(v_1, v_1')$, $(u_1', u_1)$, and $(w_1', w_1)$ as defined above, the value vehicles $v_1$ and $v_1'$ must each adopt one of the following two movement policies:
\begin{inparaenum}[(a)]
\item delay for exactly $1$ unit of time and then move beyond the paths of $(u_1', u_1)$ at speed $\speedlimit$ (i.e., delay-first); or
\item move beyond the paths of $(w_1', w_1)$ at $\speedlimit$ without delay (i.e., delay-last).
\end{inparaenum}
Additionally, $v_1$ and $v_1'$ may not select the same policy.
\end{lemma}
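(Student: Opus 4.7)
The plan is to analyze the collision constraints at each of the four ``corner'' points where the vertical paths of the value vehicles meet the horizontal paths of the $u$-pair helpers, then repeat the analysis for the $w$-pair, and finally combine the two conclusions using the total delay budget of each value vehicle. By construction every vehicle in the lemma has a delay budget of exactly $1$ time unit, since its deadline allows just the straight-line travel time at $\speedlimit$ plus one extra unit.

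For the $u$-pair the candidate collision loci are the corners $(x,y)$, $(x{+}1,y)$, $(x,y{+}1)$, $(x{+}1,y{+}1)$. At each such corner I would compute the time interval during which each of the two perpendicular bodies contains that point; because every body has length $1$ and speed at most $\speedlimit$, each such ``blocking interval'' has length at least $1$. For two blocking intervals at a common corner to be disjoint, and given that each vehicle's cumulative delay when it reaches the corner is at most $1$, the only possibility is that one vehicle arrives with cumulative delay $0$ and the other with cumulative delay $1$; intermediate values are ruled out, since disjointness of two length-$\geq 1$ intervals forces the two delays to differ by at least $1$. Systematically enumerating the four corners leaves only two feasible joint configurations of the six delays, and in each one exactly one of $v_1$ and $v_1'$ has consumed a full unit of delay by the time it has cleared the $u$-pair while the other has consumed none. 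The $w$-pair analysis is identical up to a translation by $\Delta$ in space and time, so by the same argument exactly one of $v_1, v_1'$ has cumulative delay $1$ at the $w$-pair crossing as well.

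Finally, I would combine these conclusions with the budget of $1$ per value vehicle. The vehicle that absorbed a full unit at the $u$-pair has no remaining budget, so it still carries cumulative delay $1$ at the $w$-pair; consequently the other vehicle must be the one with delay $0$ at $w$, and (since having delay $1$ at both would force both into the ``delayed'' role at $w$, which we just ruled out) it must in fact have incurred no delay anywhere before the $w$-pair. One value vehicle is therefore delay-first and the other delay-last, and their policies differ, as the lemma claims. The main technical obstacle I foresee is ruling out ``mixed'' delay profiles in which a value vehicle splits its $1$ unit of delay between its start and the interior of a body-crossing; I expect this to be handled by the observation that any such mid-crossing pause lengthens the vehicle's own blocking interval beyond $1$, which combined with the opponent's length-$\geq 1$ blocking interval exceeds the $2$-unit window available under the budgets and forces a collision. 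Once that ``crossings happen at full speed'' sub-lemma is in place, the remainder of the argument is a finite case analysis over the eight corners of the two helper pairs.
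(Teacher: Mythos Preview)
Your proposal is correct and rests on the same underlying observations as the paper's proof, but the organization differs. The paper argues by a short chain of implications: it picks out the single corner where $v_1$ and $u_1'$ coincide, notes they must adopt opposite delay choices (exactly $0$ or exactly $1$, since a unit-length vehicle needs a full unit to clear), then observes that a delay of $v_1'$ propagates to $u_1'$ (because $u_1'$ needs one time unit to reach $v_1'$'s column), which in turn forces $v_1$'s choice; the symmetric dependency through $u_1$ rules out the other coincidence. The $w$-pair is invoked only to forbid a mid-course swap, and the paper dispatches that with the one-line remark that the lagging vehicle is already at $\speedlimit$ and so cannot catch up.

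Your version instead enumerates all four corners of each helper pair, extracts a blocking-interval constraint at each, and solves the resulting system; the no-swap conclusion falls out of your delay-budget bookkeeping rather than from a speed-limit observation. This is a legitimate and somewhat more uniform route, and your ``crossings happen at full speed'' sub-lemma is exactly the ingredient the paper leaves implicit. The trade-off is that the paper's chain-of-implications is shorter because it touches only the two ``tight'' corners that actually carry information, whereas your enumeration visits all eight and must then discard the slack constraints; conversely, your framing makes the $0$/$1$ dichotomy and the exclusion of fractional delays more explicit than the paper does.
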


\begin{proof}
First, notice that because $v_1$ and $u_1'$ are in the same position at time $t$ and are traveling toward each other, they will collide if neither one delays.  Instead, they must choose different movement profiles so that one delays first, allowing the other to pass.  This delay must be exactly $1$ time unit long.  Any longer and the delaying vehicle would miss its deadline; any shorter and there would not be sufficient time for the other vehicle to pass (traveling at $\speedlimit$, a vehicle of length $1$ requires this much time).   

Second, notice that a delay of $v_1'$ necessitates a similar delay of $u_1'$.  This is because it takes $1$ time unit for $u_1'$ to reach the point at which their paths intersect.  If $v_1'$ was to delay $1$ time unit yet $u_1'$ was to leave immediately, they would reach this point simultaneously and collide.  

Given that $u_1'$ must delay if $v_1'$ does and $v_1$ cannot enact the same movement policy that $u_1'$ does, it must be the case that both value vehicles cannot choose to delay for $1$ time unit at this point.  A similar dependency exists between the value vehicles and $u_1$, though this dependency prevents $v_1$ and $v_1'$ from both leaving immediately.

\begin{figure}[htbp]
\centerline{\includegraphics[scale=.45]{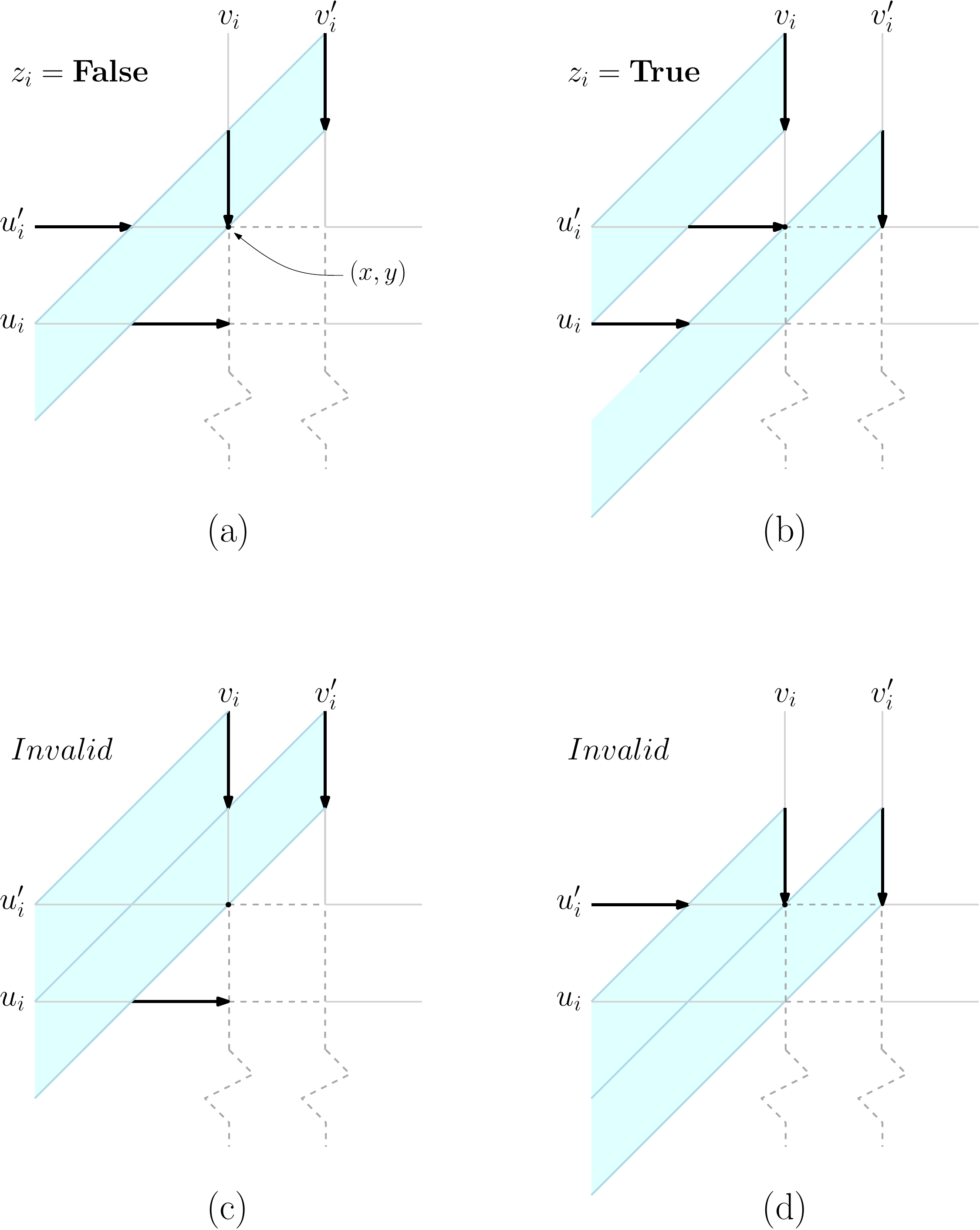}}
\caption{(a/b) Value vehicles taking on opposing values, allowing for valid paths for the helper vehicles.  (c/d) If both value vehicles select the same delay policy, then there is no valid speed profile for one of the helper vehicles. (See Remark~\ref{rem:motion_vis} on figure layouts.)}
\label{fig:helper_vehicles}
\end{figure}

The logic above also holds for the second helper pair, $(w_1', w_1)$, constraining the value vehicles to opposing movement policies until they have moved beyond the paths of the helper vehicles.  This also prevents the value vehicles from swapping movement policies.  To do so would require the lagging vehicle (i.e., the vehicle that adopted the delay-first policy) to speed up while the lead vehicle slows down.  However, given the constraints placed on the vehicles, they are already traveling at the speed limit $\speedlimit$, so the lagging vehicle may not go any faster.
\end{proof}

To represent all of the variables in $\{z_1, \ldots{}, z_n\}$ we create multiple instances of the mechanism described above, one for each variable.  These instances are lined up, one in front of the other, to form a common variable stream (see Fig.~\ref{fig:variable_stream}).  The value vehicles' positions are initialized so that each member in a pair is collinear with the respective members of all other pairs of value vehicles.  Additionally, the starting positions are spaced $s \geq 7$ units apart.  This padding is to allow for the later insertion of a mechanism that regulates the timing of truth values flowing through the system. 

\begin{figure}[htbp]
\centerline{\includegraphics[scale=.38]{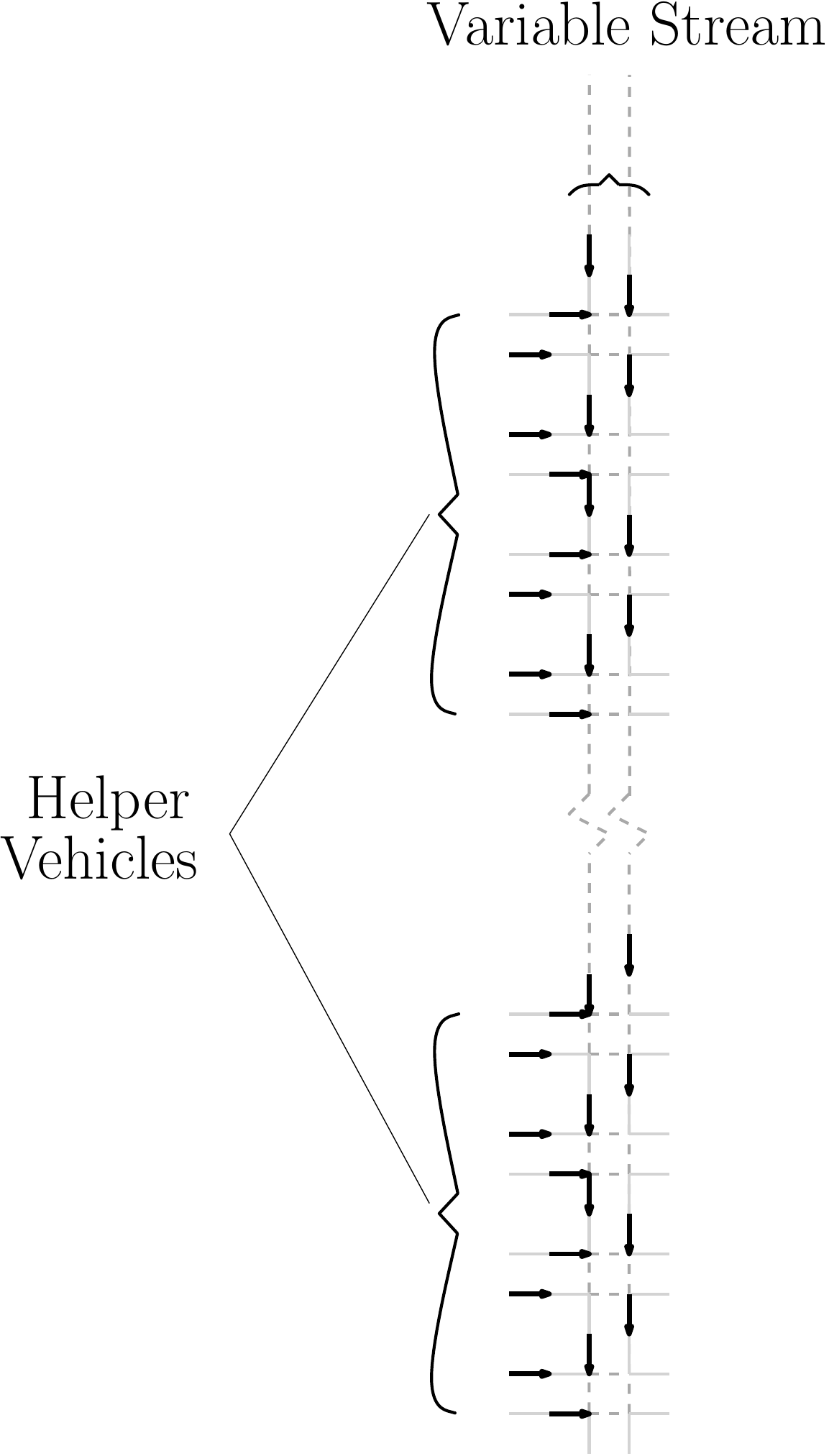}}
\caption{An example of value vehicles arranged into a variable stream representing four variables.}
\label{fig:variable_stream}
\end{figure}

The variable stream is conceptually divided into blocks of length $s|V|$, long enough to accommodate all of the value vehicles and their requisite spacing.  Every clause in $F$ is associated with two of these blocks (one for the positive literals and one for the negative literals), requiring $2|C|$ such blocks (see Fig.~\ref{fig:overview}).  Two extra blocks are added, one at either end of the variable stream, to accommodate the initialization of the value vehicles with the helper vehicles.  Truth values for the appropriate literals will be copied and transferred out of each block to a mechanism which adjusts their relative timing.  This adjustment prepares the vehicles for a final mechanism that validates the satisfaction of the associated clause.  

\begin{figure}[htbp]
\centerline{\includegraphics[scale=.38]{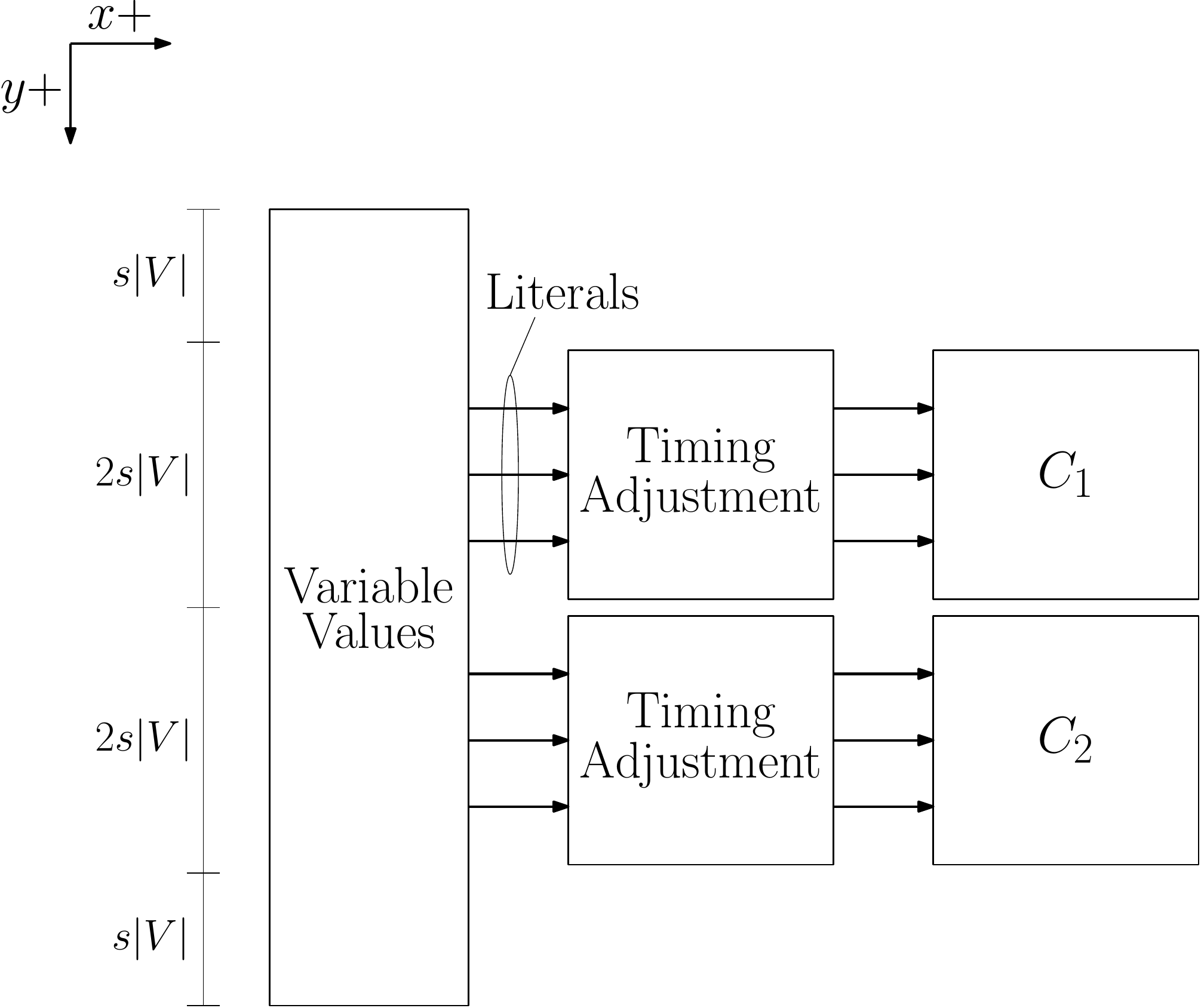}}
\caption{An overview of a reduction from 3-SAT to an instance of the Traffic Crossing Problem.}
\label{fig:overview}
\end{figure}

\begin{subequations}
So, given a formula $F$ with $|C|$ clauses and $|V|$ variables, each variable $z_i$ is represented by a vehicle $v_i$ with the following parameters:
\begin{align}
\initpos_i &=  (0, \space si) \\
\goalpos_i &= (0, \space 2s|V|(|C| + 1) + si) \\
\inittime_i &= 0 \\
\goaltime_i &=  2s|V|(|C|+1) + 1
\end{align}
In addition, the vehicle $v_i'$ is created with similar parameters, but shifted $1$ unit to the right.
\end{subequations}

\subsection{Value Transmission and Timing}

For each clause, the three literal values will need to be carried to the appropriate clause mechanisms so that they arrive in the correct place at the correct time. This requires the introduction of two new mechanisms: one that copies truth values, and one that can adjust the timing of when a value reaches a particular location.  

The first mechanism uses a pair of vehicles whose movement is constrained by a perpendicular pair of vehicles in the same way as the helper vehicles do.  The second mechanism uses a snaking path to induce a delay by increasing the distance traveled. 

\subsubsection{Value Duplication}
In order to perform clause verification we will need the ability to transmit the variable values freely around our space.  To do so, a new pair of parallel vehicles is created, separated by a distance of $1$, whose purpose is to copy these values from the variable stream and carry them elsewhere.  This pair is placed so that its starting position lies on the leftmost side of the variable stream, traveling to the right, and its start time $\inittime_i$ is the time at which the leading edge of the appropriate value pair reaches the vertical position of the uppermost vehicle (see Fig.~\ref{fig:copying_values2}).  Just like the helper vehicles, each of these copy vehicles has their deadlines set so that they may delay for $1$ time unit at most and because of this, the vehicles become a negative copy of the original value vehicles, with the negation on top and the original variable value on the bottom.  We can continue to copy these values in order to carry them through the traffic space, taking orthogonal turns each time we do so.  Any copies along this path that travel vertically will carry the variable's value on the left and the negation on the right.  Any horizontal copy carries the negation on top and the original value below.

\begin{figure}[htbp]
\centerline{\includegraphics[scale=.45]{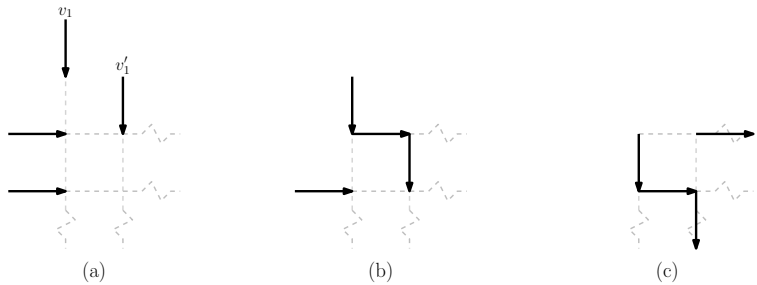}}
\caption{(a) An example of transferring a truth value at start time $\inittime_i$ for the copying vehicles.  In this example, the variable $z_1$ is {\TRUE}, making $v_1$ and $v_1'$ {\TRUE} and {\FALSE}, respectively.  (b) At time $\inittime_i + 1$, notice that in the orthogonal copy the upper vehicle will take on the value of the negation while the lower vehicle takes the original value.  (c) Time $\inittime_i + 2$. }
\label{fig:copying_values2}
\end{figure}

Each of a clause's positive literal values will be copied off of the variable stream simultaneously.  The negative literals are copied similarly.    By chaining vehicle copies across the space we can route the literal values to any location as necessary. 

\subsubsection{Timing and Delays}
The routing of values may require that they travel different distances to reach certain points.  By the structure of our reduction, except when stopped, all vehicles travel at the same speed.  Because of this, any difference in path length will cause a difference in timing that may need to be corrected.  This is done through the introduction of a delay mechanism.  This mechanism is inserted into the path of every copy coming off of the variable stream and can be configured to delay a vehicle pair's leading edge by an arbitrary amount.  This delay does not affect the values carried by the vehicles.  Essentially, the value is routed through an S shape in the mechanism, doubling back on itself (see Fig.~\ref{fig:delay_mechanism}).  The size of this S determines the extra distance that must be traveled and thus the total amount of delay.  A parameter $d$ represents the extra distance added to the S in order to tune the mechanism, leading to a delay of $2d$ (as described below).  Vehicle pairs are arranged in the mechanism as follows, with the first and last referred to as the \emph{incoming pair} and \emph{outgoing pair}, respectively:
\begin{itemize}\itemsep1pt \parskip0pt \parsep0pt
\item $(x, y)$ and $(x, y+1)$ at time $t$,
\item $(x+2+d, y)$ and $(x+3+d, y)$, with a start time of $t+2+d$,
\item $(x+3+d, y+2)$ and $(x+3+d, y+3)$, with a  start time of $t+4+d$,
\item $(x, y+2)$ and $(x+1, y+2)$, with a  start time of $t+6+2d$,
\item $(x, y+4)$ and $(x, y+5)$, with a  start time of $t+8+2d$,
\item $(x+5+d, y+5)$ and $(x+6+d, y+5)$, with a  start time of $t+13+3d$,
\item and $(x+5+d, y)$ and $(x+5+d, y+1)$, with a  start time of $t+17+3d$. 
\end{itemize}

\begin{figure}[htbp]
\centerline{\includegraphics[scale=.45]{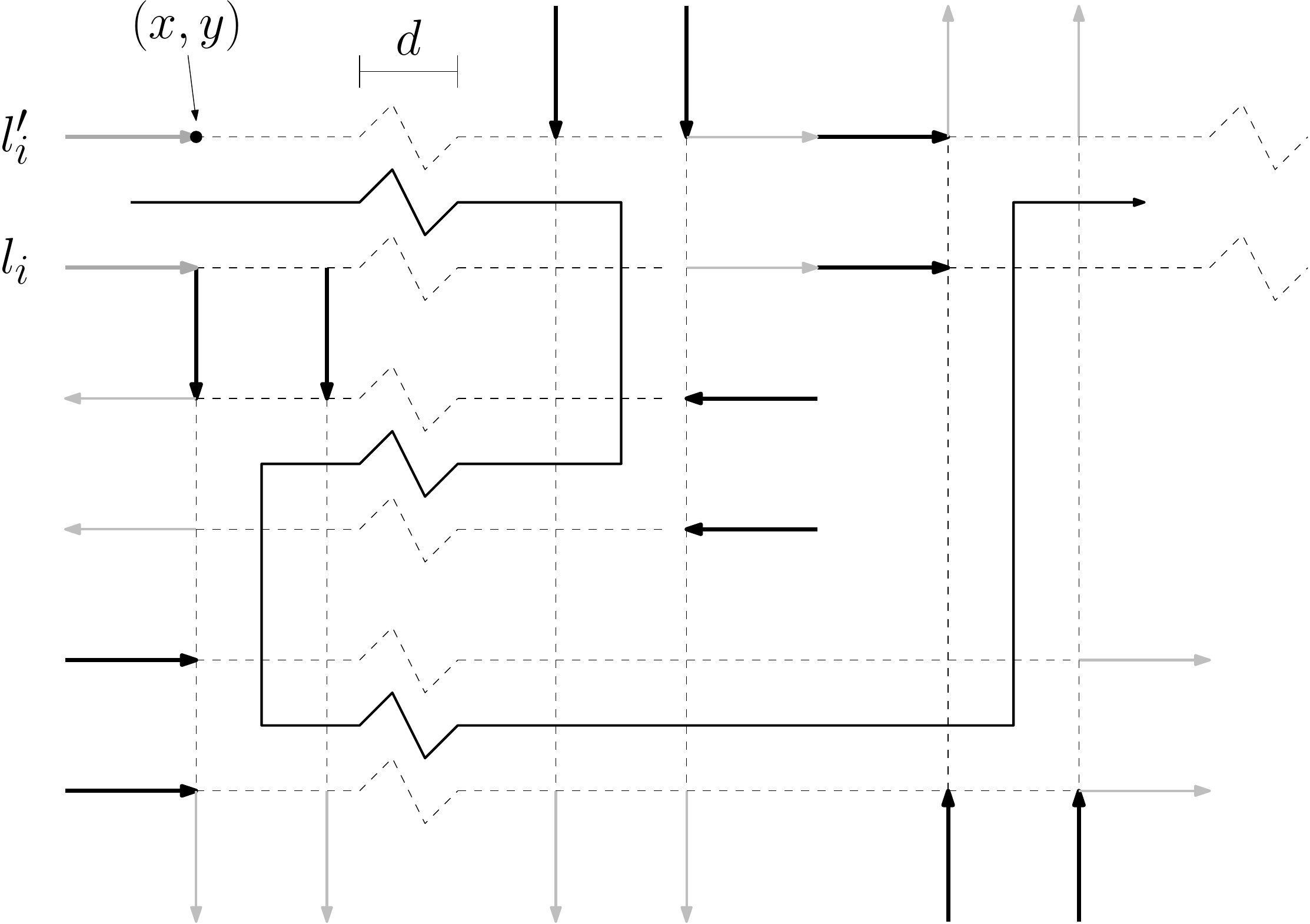}}
\caption{A diagram tracing the path a single pair of truth values $(z_i', z_i)$ take through the delay mechanism.  The dotted vertical line represents where the mechanism can be expanded, separating the vehicles on either side by a distance of $d$ and thus increasing the induced delay.}
\label{fig:delay_mechanism}
\end{figure}

The distance between the incoming vehicle pair and the outgoing vehicle pair is $5 + d$, so, if the incoming pair were to continue on, both pairs would be in the same position at $t+5+d$.  Since the outgoing pair starts at time $t+17+3d$, the mechanism induces a delay in the transmission of the incoming pair of $12 + 2d$.  By adding the delay mechanism to all copies made from the variable stream, we can adjust the relative timing of each vehicle pair by adjusting the value of $d$ in each delay mechanism.

\subsection{Clause Satisfaction}\label{sec:clause_verifier}
Clause satisfaction is verified with a mechanism that forces two particular vehicles to collide if all of the literals are {\FALSE}.  The mechanism consists of two parts: one for the positive literals and one for the negative literals.  Each part contains vehicle pairs representing the literals and their negations, blocking vehicles to appropriately constrain movement, and a verifying vehicle.  If the set of literals do not satisfy the clause, each verifying vehicle is constrained to a single speed profile and they will collide.  First, we look at the half that verifies the negative literals.

Define a point $r = (x, y)$ to be a reference point from which all other positions will be defined at a reference time $t$ (see Fig.~\ref{fig:clause_verifier_3neg}). Next, assume three pairs of incoming vehicles $(l_1', l_1), (l_2', l_2),$ and $(l_3', l_3)$, each a copy of the appropriate variables.  These pairs travel horizontally, $1$ unit apart vertically, with their leading edges $4$ units behind the previous pair.   Thus, the leading edges of the pairs are $(x, y)$ and $(x, y+1)$, $(x-4, y+2)$ and $(x-4, y+3)$, and $(x-8, y+4)$ and $(x-8, y+5)$. 

Next, place two blocking vehicles at $(x-2.5, y+1.5)$ and $(x-6.5, y+3.5)$.  These vehicles have a start time of $t$, travel horizontally to the right, and have their deadlines set so that they must travel at $\speedlimit$ with no delays.  Finally, place a verifying vehicle $v$ at $(x, y)$ with a start time of $t$ and traveling downward.  The deadline for the verifying vehicle is set so that it can delay up to $5$ time units. 

\begin{lemma} \label{lemma:clause_ver}
Given the vehicle pairs, blocking vehicles, and verifying vehicle defined above, the verifier must delay for $5$ time units if all of the literals are {\FALSE} but may delay for less if at least one is {\TRUE}.
\end{lemma}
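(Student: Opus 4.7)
The plan is to parameterize the verifier's trajectory by the times $T_k$ at which its leading edge first reaches height $y+k$. Since the verifier cannot exceed $\speedlimit$, we have $T_{k+1} \geq T_k + 1$, so the cumulative delay $F_k := T_k - k$ is a non-decreasing nonnegative sequence, capped at $5$ by the deadline budget. The verifier's body occupies the crossing point $(x, y+k)$ throughout the open interval $(T_k, T_{k+1})$; hence any vehicle that blocks $(x, y+k)$ during an open interval $(b_1, b_2)$ imposes the disjunctive constraint
\[
F_{k+1} \leq b_1 - k - 1 \qquad \text{or} \qquad F_k \geq b_2 - k
\]
(cross before the block, or after it). The first step would be to tabulate these constraints: in each pair the TRUE (delay-first) vehicle blocks its lane one time unit later than its FALSE (delay-last) partner, the two blockers deterministically block the half-integer lanes $y+1.5$ and $y+3.5$ during $(2.5,3.5)$ and $(6.5,7.5)$ respectively, and the horizontal offset of $4$ units between successive pairs shifts their blocking times by $4$.

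The forward direction -- all-FALSE forces a delay of exactly $5$ -- is the main combinatorial step. Under the all-FALSE labeling the six literal blocks and two blocker blocks interlock into a staircase in $(k,\tau)$-space. Sweeping through the constraints in order of increasing $k$, the monotonicity of $F$ eliminates the ``early'' branch of each disjunction, forcing the ``late'' branch: lane~$0$ gives $F_0 \geq 1$, whence blocker~$1$ gives $F_{1.5} \geq 2$, lane~$2$ gives $F_2 \geq 3$, blocker~$2$ gives $F_{3.5} \geq 4$, and lane~$4$ gives $F_4 \geq 5$. Since $F$ is non-decreasing and capped at $5$, equality must hold from lane~$4$ onward, so the verifier's total delay is exactly~$5$.

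For the converse, if at least one literal is TRUE then the corresponding pair's two blocking intervals each shift by one unit, breaking the staircase and reopening the ``early'' branch of a disjunction along the chain. I would exhibit, in each of the three single-TRUE cases, an explicit $F$-sequence with final value strictly less than $5$; for instance, if pair~$1$ is TRUE then setting $F_k = 0$ through lane~$1$ satisfies the shifted lane-$0$ constraint $F_1 \leq 0$, and the remaining constraints then cap the total delay at $3$. The main subtlety to address is that restricting attention to the scalar sequence $F_k$ loses no generality compared to arbitrary speed profiles: this holds because the collision condition at each lane depends only on the monotone entry and exit times $T_k$ and $T_{k+1}$, so any intermediate speed variation is immaterial to collision avoidance.
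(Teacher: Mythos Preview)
Your framework---parameterizing the verifier by entry times $T_k$ and cumulative delays $F_k = T_k - k$, then reading each crossing lane as a disjunctive constraint---is sound and in fact cleaner than the paper's informal time-step walkthrough. The converse direction is also fine as sketched: there you only need to \emph{exhibit} one valid global profile, so you are free to pin every flexible vehicle to its nominal trajectory and then check your $F$-sequence.

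The forward direction, however, has a genuine gap. Your staircase runs through lanes $0$, $2$, $4$, which are the lanes of the $l_i'$ vehicles. But in the all-{\FALSE} scenario each $l_i'$ carries the delay-\emph{last} policy: it has not yet spent its one unit of slack and is free to insert that delay while passing the verifier's column. Its blocking interval is therefore not the fixed window you tabulate but a one-parameter family $(b_1+d,\,b_2+d)$ with $d\in[0,1]$. With $d=1$ for $l_1'$, the lane-$0$ disjunction becomes $F_1\le 0$ or $F_0\ge 2$, and the first branch is now feasible (full speed through lane~$0$), so your chain collapses at the very first step. Since the lemma asserts that the verifier \emph{must} delay $5$ in every valid profile, you cannot assume the $l_i'$ sit still. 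The paper is explicit on exactly this point: it notes that while $l_1'$ could still delay, its partner $l_1$---having adopted delay-first---cannot, and it is $l_1$ that actually forces the wait.

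The repair is immediate within your own framework: run the staircase through lanes $1$, $3$, $5$ (the $l_i$ lanes) and the two blockers. Those five vehicles have exhausted or never had any slack, so their blocking windows $(1,2)$, $(2.5,3.5)$, $(5,6)$, $(6.5,7.5)$, $(9,10)$ are rigid, and your monotonicity argument then yields
\[
F_1\ge 1 \;\Rightarrow\; F_{1.5}\ge 2 \;\Rightarrow\; F_3\ge 3 \;\Rightarrow\; F_{3.5}\ge 4 \;\Rightarrow\; F_5\ge 5,
\]
which is the desired conclusion.
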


\begin{proof}
First, notice that every horizontal vehicle in the mechanism is on a possible collision course with the verifying vehicle.  Thus, if the slope of the line between one of these vehicles and the verifying vehicle has a magnitude of $1$ (or if their positions are equal), the vehicle will collide with the verifying vehicle $v$ if both continue without delay.

\begin{figure}[htbp]
\centerline{\includegraphics[scale=.75]{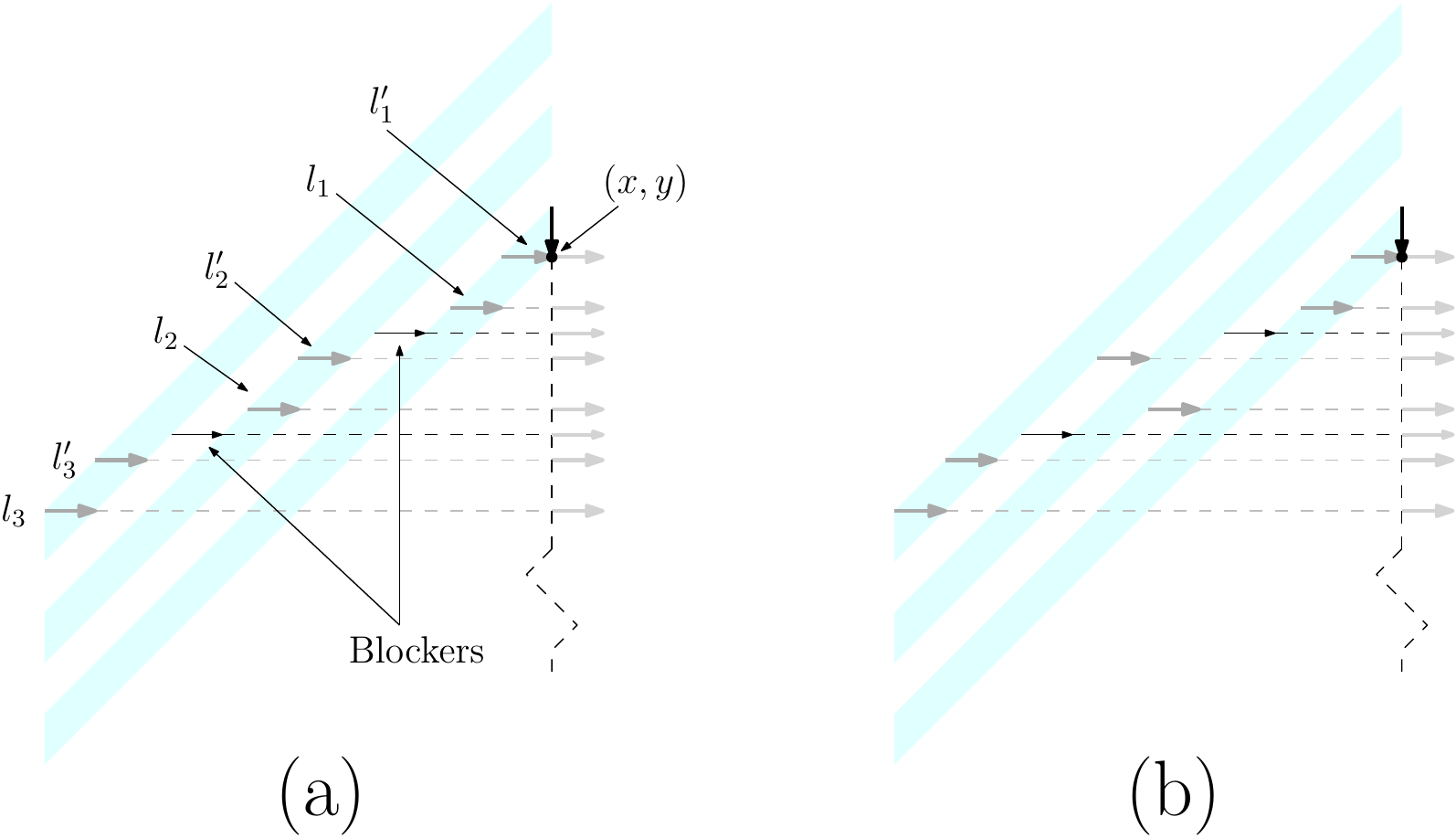}}
\caption{(a) The initialization of the negative half of a clause verifier for the clause $(\neg z_1 \vee \neg z_2 \vee \neg z_3)$ and with each variable $z_i = {\TRUE}$.  (b) The verifier with $z_2= {\FALSE}$ and $z_1 = z_3 = {\TRUE}$. }
\label{fig:clause_verifier_3neg}
\end{figure}

If each variable $z_i$ is {\TRUE} then its negative copy $l_i'$ is {\FALSE}, taking a delay-last movement policy.  This places $l_1'$ at $(x, y)$ and $l_1$ at $(x-1, y-1)$ at time $t$, which would lead to a collision with $v$.  While $l_1'$ could still delay for $1$ unit, $l_1$ no longer has this freedom as it has adopted the delay-first policy.  Thus, to avoid a collision, $v$ must delay for at least $1$.

At time $t+1$, the first blocking vehicle has moved to $(x-1.5, y+1.5)$.  The blocking vehicles' deadlines allow for no delay, so again $v$ must delay.

At time $t+2$, $l_2'$ has moved to $(x-2, y+2)$ and $l_2$ has moved to $(x-3, y+3)$.  Just as with $l_1$, $v$ is forced to delay to avoid a collision.

At time $t+3$, the second blocking vehicle is at $(x-3.5, y+3.5)$, forcing another delay of $v$.

Finally, at time $t+4$, $l_3'$ has moved to $(x-4, y+4)$ and $l_3$ has moved to $(x-5, y+5)$, forcing one last delay of $v$.

Thus, if all of the variables $z_i$ are {\TRUE}, making the negative literals $l_i'$ all {\FALSE}, the verifying vehicle $v$ must delay for $5$ units of time in order to avoid a collision.

If any of the variables are {\FALSE}, their resultant copies $l_i'$ and $l_i$ will have shifted horizontal positions, no longer lying on the line of collision with $v$ (i.e., their slopes are no longer magnitude $1$), allowing $v$ to delay for less than $5$ units and slip between them. 
\end{proof}

The positive half of the mechanism works in the same manner, with slight changes to the incoming literal vehicles and some added vehicles to account for these changes.  First, the incoming literal pairs are not staggered with respect to each other but instead arrive with collinear leading edges and $1$ unit apart (see Fig.~\ref{fig:clause_verifier_3pos}(a)).  Next, a copy of each literal pair is made, traveling downward.  The first copy pair is placed at $(x+5, y)$ and $(x+6, y)$ and has a start time of $t+5$.  The next pair is placed at $(x+3, y+2)$ and $(x+4, y+2)$ with a start time of $t+3$.  The third pair is placed at $(x+1, y+4)$ and $(x+2, y+4)$ and has a start time of $t+1$.  

Next, two blocking vehicles, traveling downward, are added at $(x+2.5, y+9.5)$ and $(x+4.5, y+5.5)$, both with a start time of $t+9$.  

Finally, a verifying vehicle traveling to the right is added at $(x+1, y+12)$, with a start time of $t+9$ and deadline allowing for a delay of up to $5$ time units.  As before, the vehicle will be forced to delay for $5$ time units if the clause is not satisfied by any of the positive literals.

\begin{figure}[htbp]
\centerline{\includegraphics[scale=.75]{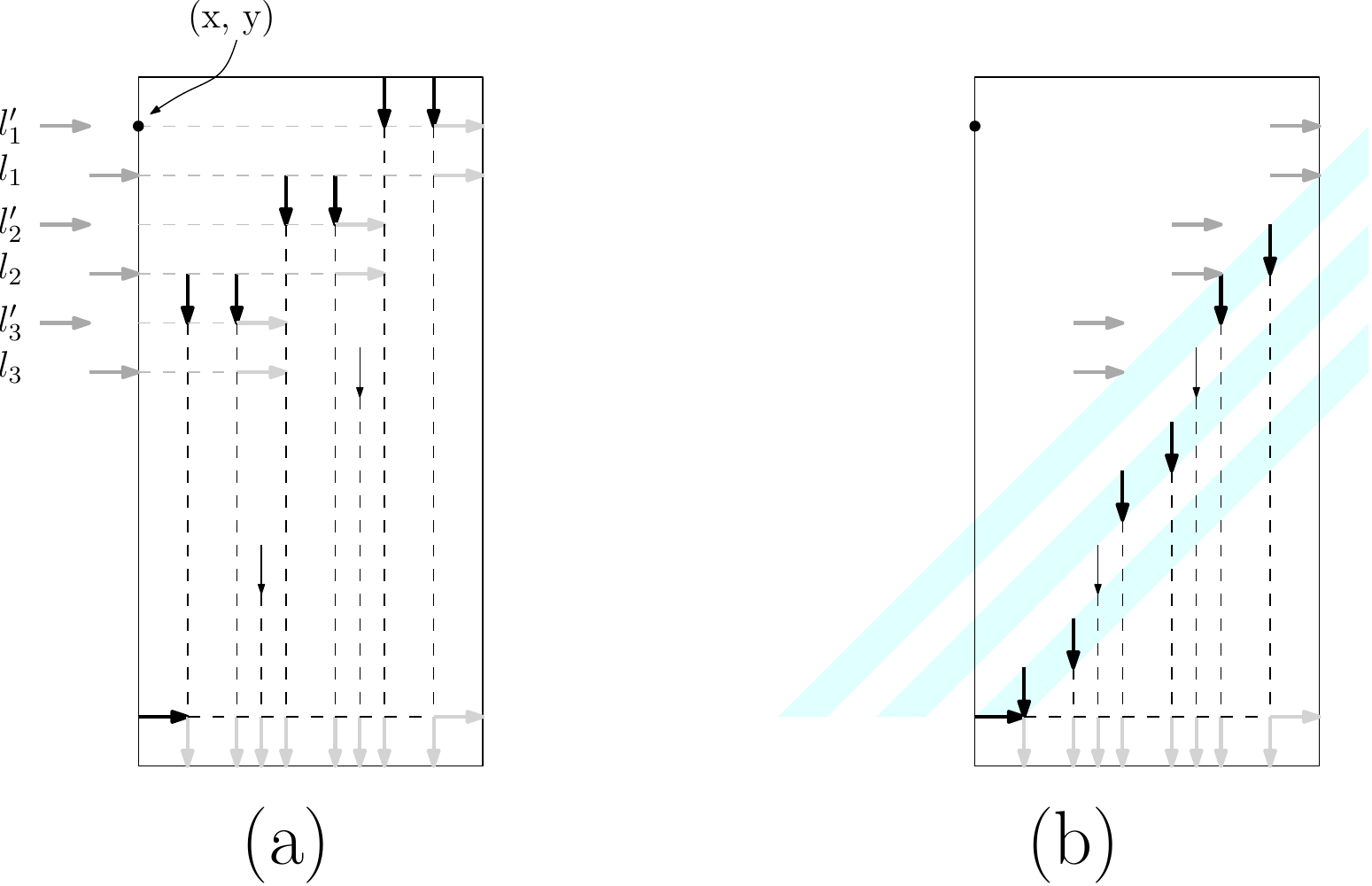}}
\caption{(a) The initialization of the positive half of a clause verifier for the clause $(z_1 \vee z_2 \vee z_3)$ and with each variable $z_i = {\FALSE}$.  (b) The verifier at time $t+9$.  }
\label{fig:clause_verifier_3pos}
\end{figure}

A clause will never have more than three literals, so it will never be the case that both the positive and negative halves of the clause verifier will have three literals.  Blocking vehicles are added to take the place of missing literals in each half and their deadlines are set so that no delay is allowed.  In this way, the verifying vehicles are still forced to delay for $5$ units when their associated set of literals do not satisfy the clause.

The positive and negative halves of the mechanism are placed so that the paths of the verifying vehicles intersect.  However, the time at which each half processes its literals may differ, dependent on which variables are being evaluated and the distance their values must travel to reach the mechanism.  This can be compensated for in the delay mechanisms so that the verifying vehicles will collide with one another if both delay for $5$ time units.  In this way, if a clause is not satisfiable, a collision is inevitable, rendering the traffic crossing unsolvable.  If the clause is satisfiable, one or both of the verifying vehicles will have at least two movement options, allowing them to avoid a collision.

\subsection{Complete System Example} \label{sec:example}
In the complete system, all of the variables are stacked on top of each other to form a variable stream.  The appropriate literals are extracted, passed through a delay mechanism, and routed to their clause verifier halves.  These mechanisms output a vehicle that will have delayed for $5$ time units if the variable assignments do not satisfy their respective clauses.  The verifier vehicles from each clause will collide if neither set of literals satisfies them.  An example of a 3-SAT reduction for the formula $(\neg z_1 \vee z_2 \vee \neg z_3)$ can be seen in Fig.~\ref{fig:example_SAT}.

\begin{figure}[htbp]
\centerline{\includegraphics[scale=.35]{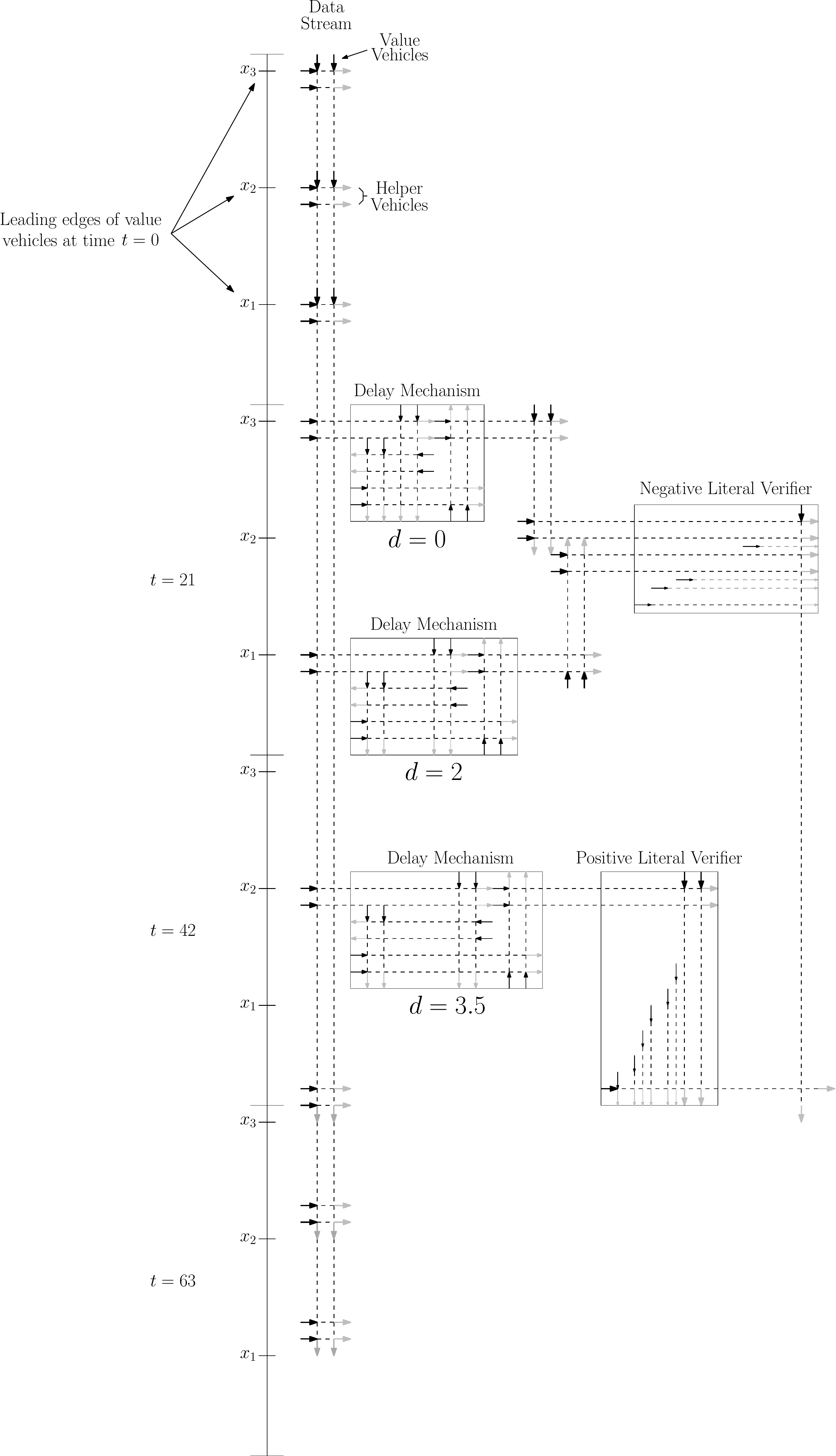}}
\caption{An example of a 3-SAT problem with $F = (\neg z_1 \vee z_2 \vee \neg z_3)$, expressed as a traffic crossing.}
\label{fig:example_SAT}
\end{figure}

\subsection{Analysis of Reduction Complexity}

Every variable in the formula $F$ requires $6n$ vehicles: one for the variable, one for its negation, and the two helper pairs.  Next, when considering each of the $m$ clauses, the greatest number of vehicles is necessary when all of the literals are positive.  $15$ are needed for the positive verifier, $9$ for the negative verifier, $14$ for each of the two delay mechanisms, and $12$ for routing, for a total of at most $64$ vehicles per clause.  The complexity of translation is then $6n + 64m$ and is therefore on the order of $O(n + m)$.

As described above, the constructed mechanisms will only allow for a valid set of speed profiles if the formula $F$ is satisfiable.  Given this and the polynomial time needed to create the reduction, the traffic crossing problem is $\NP$-hard.
\vspace*{-10pt}

\subsection{Membership in $\NP$}

\begin{lemma}
The Traffic Crossing Problem is in $\NP$.
\end{lemma}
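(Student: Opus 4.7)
The plan is to exhibit a polynomial-size certificate together with a polynomial-time verifier. Although each speed profile $\delta_i(t)$ is in principle a continuous function, I would argue that feasibility is determined by a polynomial number of combinatorial decisions, after which checking consistency reduces to a linear program. The certificate consists, for every pair of vehicles that could conceivably interact, of a single bit indicating which vehicle clears their shared spatial region first. The pairs that matter are the perpendicular pairs whose two lanes meet at a point traversed by both, plus the collinear pairs sharing a lane; there are at most $O(n^2)$ of them, so the certificate has $O(n^2)$ bits.

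Given such a certificate, I would reduce verification to testing feasibility of a linear program. For each vehicle $v_i$ and each significant point $p$ along its path (namely $\initpos_i$, $\goalpos_i$, and every point where another vehicle's path crosses that of $v_i$), introduce a variable $\tau_{i,p}$ denoting the time at which the leading edge of $v_i$ reaches $p$. I would then impose the linear constraints $\tau_{i,\initpos_i} \geq \inittime_i$ and $\tau_{i,\goalpos_i} \leq \goaltime_i$, together with $\tau_{i,p'} - \tau_{i,p} \geq \|p' - p\|/\speedlimit$ for consecutive significant points $p \prec p'$ along $v_i$'s path, which simultaneously encodes monotonicity and the speed limit. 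For each crossing point $p$ between a horizontal $v_i$ and a vertical $v_j$, the certificate selects one of the two gap constraints $\tau_{j,p} \geq \tau_{i,p} + l_i/\speedlimit$ or $\tau_{i,p} \geq \tau_{j,p} + l_j/\speedlimit$, reflecting the time needed for the trailing edge to clear $p$; collinear interactions are handled analogously at the boundaries of the overlap region. The resulting LP has $O(n^2)$ variables and $O(n^2)$ constraints, and LP feasibility is decidable in polynomial time.

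Correctness requires both directions. Any valid set of speed profiles induces arrival times and passing orders that satisfy the LP, so the certificate extracted from it verifies. Conversely, from any feasible $\tau$-assignment, reconstruct $\delta_i$ by moving at $\speedlimit$ between consecutive scheduled points and idling for the residual time; because each vehicle moves monotonically along a fixed line and the gap constraints force horizontal and vertical occupancies of each crossing to be temporally disjoint, the reconstructed segments never overlap. The main obstacle I anticipate is justifying rigorously that pairwise constraints at discrete crossing points capture all collisions: one must verify that between consecutive breakpoints the piecewise-constant reconstructed motion cannot create a collision with some third vehicle, which follows from monotonicity and the fact that interactions occur only at crossing points. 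A minor subtlety, easily handled by the fact that the segments $\sigma_i(t)$ are open, concerns tie-breaking at the exact instant a trailing edge departs $p$.
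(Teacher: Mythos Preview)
Your certificate---one priority bit per potentially interacting pair---is the same as the paper's, and casting verification as LP feasibility is a reasonable alternative to the paper's approach of simulating a greedy ``full-speed'' schedule under those priorities. The gap is your separation constraint $\tau_{j,p} \geq \tau_{i,p} + l_i/\speedlimit$. The quantity $l_i/\speedlimit$ is only a \emph{lower bound} on the time from $v_i$'s leading edge arriving at $p$ to its trailing edge departing $p$; the actual clearance time depends on $v_i$'s motion over the next $l_i$ units of distance, which your LP does not constrain. The issue bites whenever two intersections on $v_i$'s lane lie within $l_i$ of each other. Concretely, let $v_i$ have length $2$ and pass through intersections $p$ and $p'$ one unit apart, with the certificate putting $v_i$ first at $p$ but second at $p'$. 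Then $v_i$'s tail cannot leave $p$ until its head has moved past $p'$, which in turn cannot happen until the $p'$-crosser has cleared; yet your LP lets the $p$-crosser arrive at time $\tau_{i,p} + 2/\speedlimit$ irrespective of $\tau_{i,p'}$, so the LP can be feasible while no compatible speed profile exists. The repair is to add, for each crossing $p$ of $v_i$, a breakpoint at $p + l_i d_i$ and write the gap constraint as $\tau_{j,p} \geq \tau_{i,\,p + l_i d_i}$; with that correction (and an analogous one for same-lane pairs) your LP argument goes through. The paper's simulation-based verifier sidesteps the issue because the greedy schedule naturally handles a vehicle whose tail still occupies a previous intersection while its head waits at the next, and its key lemma (every significant event in the greedy schedule occurs no later than in the original valid schedule) absorbs the tail-blocking case without special treatment.
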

Given an input instance of the Traffic Crossing Problem $C = (V, \speedlimit)$, consisting of $n$ vehicles in $V$, where all numeric values are given with $b$ bits of precision, we demonstrate a certificate of size $O(n^2)$ from which it is possible to validate a solution in $O(n^4(b + \log n))$ time.

\begin{description}
\item[Certificate] - For each pair of orthogonal vehicles, $v_i, v_j$, their paths cross at a single intersection.  The certificate provides a priority for each such pair, specifying which vehicle crosses through the intersection first.  This requires $O(n^2)$ bits.
\end{description}

Let $D$ be any valid set of speed profiles.  Let $P(D)$ denote the associated certificate providing vehicle priorities.  To validate $D$, we consider a specific instance called the \emph{full-speed profiles} as follows:
\begin{itemize}
\item Each vehicle $v$ moves at full speed (i.e., $\speedlimit$) until either (1) arriving at an intersection, (2)it is about to collide with the rear end of a stopped vehicle in the same lane, or (3) it has reached its destination.
\item If arriving at an intersection, the vehicle waits until all vehicles which have priority over it, according to $P$, have passed through the intersection.  The vehicle will proceed through the intersection at full speed once the last of these vehicles has passed.
\item If the vehicle has stopped in order to avoid a collision with the vehicle in front of it, it will proceed at full speed once the blocking vehicle has as well.
\item If the vehicle has reached its destination it will stop as, of course, there is no more to be done.
\end{itemize}

We refer to this instance of $D$ as $D_{full}$.  To establish correctness, it suffices to show (1) $D_{full}$ is valid if $D$ is valid and (2) $D_{full}$ can be simulated in $O(poly(n,b))$ time to determine its validity.

\begin{lemma}\label{lemma:dfull_validity}
If $D$ is a valid set of speed profiles then $D_{full}$ is also valid.
\end{lemma}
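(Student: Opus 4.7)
The plan is to verify each of the four validity conditions (equations~(\ref{eq:outside_delta})--(\ref{eq:reach_goal})) for $D_{full}$. Conditions (\ref{eq:outside_delta}) and (\ref{eq:inside_delta}) hold essentially by construction: in $D_{full}$ each vehicle is either stationary or moves at exactly $\speedlimit$, and by the rules it remains stationary outside $[\inittime_i,\goaltime_i]$. The work is in proving (\ref{eq:no_intersection}) collision-freeness and (\ref{eq:reach_goal}) goal attainment. Both will follow from a single dominance claim:

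\smallskip
\noindent\textbf{Dominance claim.} \emph{For every vehicle $v_i$ and every time $t$, the total distance traveled by $v_i$ in $D_{full}$ up to time $t$ is at least the distance traveled by $v_i$ in $D$ up to time $t$. Equivalently, every point along $v_i$'s path is reached in $D_{full}$ no later than in $D$.}

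\smallskip

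I would prove the dominance claim by contradiction. Let $t^\star$ be the infimum of times at which some vehicle is strictly behind its position in $D$, and let $v_i$ be a witness. Since $D$ moves no faster than $\speedlimit$ and $D_{full}$ moves at exactly $\speedlimit$ whenever not blocked, $v_i$ must be \emph{blocked} in $D_{full}$ on an interval starting at $t^\star$. By the definition of $D_{full}$, the block is caused by one of three things: (i)~$v_i$ has not started yet (impossible, since positions agree at $\inittime_i$); (ii)~$v_i$ is stopped behind a same-lane predecessor $v_k$; or (iii)~$v_i$ is waiting at an orthogonal intersection $X$ for some higher-priority vehicle $v_j$ (per $P(D)$) to clear. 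In case (ii), the inductive hypothesis applied at times $< t^\star$ says $v_k$ is at least as far along in $D_{full}$ as in $D$, so the free space in front of $v_i$ in $D_{full}$ is at least the free space in $D$; hence $v_i$ can match $v_i$'s motion in $D$, a contradiction. In case (iii), because $P(D)$ records the actual crossing order realized by $D$, vehicle $v_j$ enters and exits $X$ in $D$ before $v_i$ enters $X$ in $D$; by the inductive hypothesis $v_j$ reaches and clears $X$ in $D_{full}$ no later than in $D$, so $v_i$'s wait in $D_{full}$ ends no later than $v_i$'s entry time into $X$ in $D$, again a contradiction.

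Given the dominance claim, condition (\ref{eq:reach_goal}) is immediate: $v_i$ reaches $\goalpos_i$ in $D_{full}$ no later than it does in $D$, and therefore by $\goaltime_i$. For (\ref{eq:no_intersection}), same-lane collisions are ruled out by the explicit stopping rule in $D_{full}$, while orthogonal collisions at each intersection are ruled out because the priority rule forces the lower-priority vehicle to wait until the higher-priority vehicle has completely cleared. The main obstacle I anticipate is making the inductive step fully rigorous in the presence of simultaneous events (several vehicles converging on one intersection, or chained same-lane delays) and, crucially, ruling out deadlock in $D_{full}$: a cyclic wait among vehicles under $P(D)$ would have to manifest itself in $D$ as well, contradicting the validity of $D$, so the ``blocking'' relation induced by $P(D)$ is acyclic and the inductive argument has a well-defined base case at each event time.
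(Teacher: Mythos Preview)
Your proposal is correct and follows essentially the same approach as the paper. The paper discretizes your dominance claim to ``significant events'' (intersection entry/exit times) and proves $t^{\pm}_{full}(i,j) \le t^{\pm}(i,j)$ by the same first-violation contradiction argument you outline: the delayed vehicle must be blocked either by a higher-priority crossing vehicle or by a chain of stopped same-lane predecessors, and in either case the blocking vehicle would already contradict the induction hypothesis. Your continuous-time formulation and your explicit remarks on deadlock and simultaneous events are, if anything, slightly more careful than the paper's own treatment.
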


\begin{proof}
Define a \emph{significant event} (for either profile) to be the time at which some vehicle $v_i$ enters or leaves some intersection $\chi_j$.  These events will be referred to as $t^-(i,j)$ and $t^+(i, j)$, respectively, for the original set of speed profiles $D$.  The significant events for the full speed profile will be denoted as $t^-_{full}(i,j)$ and $t^+_{full}(i, j)$.  We will show:
\begin{enumerate}
\item there are no collisions in $D_{full}$
\item $\forall i,j t^\pm_{full}(i,j) \leq t^\pm(i.j)$
\item arrival times at destinations are earlier than or equal to those in $D$ when following $D_{full}$.
\end{enumerate}

\begin{enumerate}
\item No rear ending can occur by definition of the $D_{full}$ policy.  Also, no intersection collisions can occur between crossing vehicles because (by priority) one is required to wait for the other.
\item We prove this by induction in time over the significant events.  Initially, both profiles are in the same configuration, as given by the problem definition $C$.  Suppose toward contradiction that there exists a significant event concerning vehicle $v_i$ and intersection $\chi_j$ where $t^-_{full}(i,j) > t^-(i,j)$.  Consider the first such event.  There are two possible reasons why $v_i$ did not enter intersection $\chi_j$ at time $t^-(i,j)$ in profile $D_{full}$:
	\begin{enumerate}
	\item It is waiting for some crossing vehicle $v_k$ to exit the intersection (see Fig.~\ref{fig:dfull_faster}(a)).  By definition, $v_k$ must have a higher priority in $D$ (i.e., it passes prior to $v_i$ in $D$), but $v_k$ must have exited the intersection prior to $t^-(i,j)$.  This contradicts the induction hypothesis that $t^-_{full}(i,j) > t^-(i,j)$.
	\item $v_i$ can't proceed because it would rear-end the previous stopped vehicle (see Fig.~\ref{fig:dfull_faster}(b)).  In this situation, there is a chain of $1$ or more vehicles stopped in front of $v_i$, where the first vehicle in the chain $v_{i'}$ is waiting at some intersection $\chi_{j'}$ for some vehicle $v_{k'}$ with priority to pass.  In this case we can apply the argument above to $v_{i'}$, $\chi_{j'}$, and $v_{k'}$.
	Finally, the same holds for each $t^+$ value as they are each equal to the $t^-$ values offset by the constant length of the vehicles.  In other words, for vehicle $v_i$:
	\begin{align*}
	t^+_{full}(i,j) &= t^-_{full}(i,j) + length(v_i)\text{, and}\\
	t^+(i,j) &= t^-(i,j) + length(v_i)
	\end{align*}
	\end{enumerate}
	
\begin{figure}[htbp]
\centerline{\includegraphics[scale=.40]{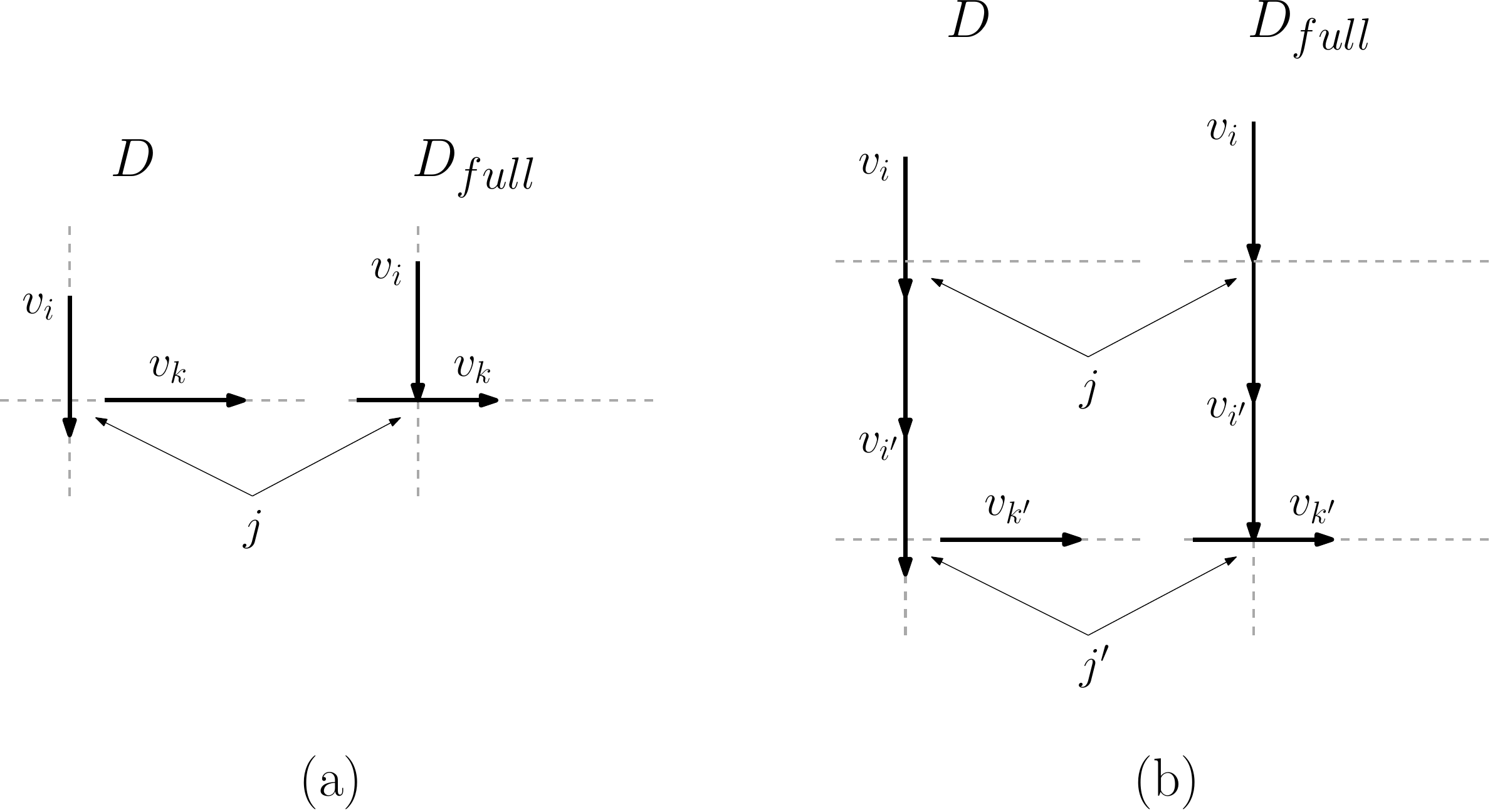}}
\caption{A figure comparing relative event times illustrating the cases in which (a) $t^-_{full}(i,j) > t^-(i,j)$ and (b) a traffic jam occurs.}
\label{fig:dfull_faster}
\end{figure}

\item From argument 2 above, $t^+_{full}(i,j) \leq t^+(i,j)$.  Given this, adding the constant distance from the last intersection to the destination to each value does not change this relationship.  Therefor, vehicles following $D_{full}$ will arrive at a time equal to or earlier than vehicles following $D$.
\end{enumerate}
\end{proof}

\begin{lemma}
Given an instance of the Traffic Crossing Problem of size $n, b$ and certificate $P$, we can simulate $D_{full}$ in $O(n^4(b + \log n))$ time.
\end{lemma}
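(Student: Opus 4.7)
The plan is to simulate $D_{full}$ via an event-driven discrete-event simulation, processing \emph{significant events} in chronological order. The relevant event types are: a vehicle entering or leaving an intersection; a vehicle halting because the priority certificate $P$ forces it to yield at an intersection, or because the predecessor in its lane has stopped within one car length; a stopped vehicle resuming at $\speedlimit$ because its obstruction has cleared; and a vehicle reaching its goal. I would maintain, for each vehicle, its current position and its current speed (either $0$ or $\speedlimit$); for each lane, the ordered list of vehicles on it; for each intersection, a pointer into $P$ indicating whose turn it is; and a global priority queue of pending events keyed by time. Extracting the next event costs $O(\log n)$ comparisons, each involving $O(b)$-bit arithmetic.

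A preprocessing pass computes, for every pair of orthogonal vehicles, the intersection point of their lanes, and sorts the intersections in travel order along each lane. All coordinates and timestamps are $O(b)$-bit quantities derived from the input, so the necessary subtractions, multiplications, and comparisons each take $O(b)$ time. The total preprocessing cost is $O(n^2(b + \log n))$.

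I would then bound the number of events and the per-event cost separately. Each vehicle crosses $O(n)$ intersections on its path, each contributing at most two enter/exit events and at most one yield-stop/resume pair, for $O(n^2)$ yield-based events globally. Rear-end stops follow the halts of predecessors in a lane, and cascading through a chain of queued vehicles can multiply this by a factor of $O(n)$ in the worst case, yielding a total of $O(n^3)$ events. Handling one event requires updating the affected vehicle's state, consulting the priority $P$ at the current intersection, and recomputing the next event for each vehicle whose pending-event time depends on the change; since up to $O(n)$ pending events (those of vehicles queued behind the affected one, and those of orthogonal vehicles currently yielding) may need recomputation, and each priority-queue update costs $O(b + \log n)$ time, the per-event cost is $O(n(b + \log n))$. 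Multiplying gives $O(n^3) \cdot O(n(b + \log n)) = O(n^4(b + \log n))$ time overall.

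Correctness follows from the determinism of the full-speed policy given $P$: the simulated trajectory is exactly $D_{full}$, and validity reduces to checking that each vehicle's simulated arrival time does not exceed its input deadline $\goaltime_i$. The main technical obstacle is the event-counting argument, specifically giving a clean amortized bound on cascading rear-end stops propagating through queued lanes; once that is settled, the remaining per-event work and the priority-queue bookkeeping are routine, and the resulting time bound subsumes the $O(n^2(b + \log n))$ preprocessing.
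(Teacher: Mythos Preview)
Your approach is the same high-level idea as the paper's---event-driven simulation of $D_{full}$ with bit-precision accounting---but the cost is carved up differently. The paper restricts ``significant events'' to intersection entries and exits only, giving $O(n^2)$ events total (each of the $n$ vehicles crosses at most $n$ intersections), and does \emph{not} promote rear-end halts or resumes to separate events. Instead it simply charges a blunt $O(n^2)$ operations per event to recompute whatever state is needed, each operation on $O(b+\log n)$-bit timestamps (since an event time is an $O(n)$-fold sum of $b$-bit input distances and lengths). That yields $O(n^2)\cdot O(n^2(b+\log n)) = O(n^4(b+\log n))$ directly and sidesteps the cascading-stop bookkeeping you flag as the main obstacle; your route trades a larger event count $O(n^3)$ for cheaper $O(n)$ per-event work and arrives at the same product.

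One technical wrinkle worth tightening: event times are not $O(b)$-bit but $O(b+\log n)$-bit, for exactly the reason just given, and a heap operation is $O(\log n)$ comparisons on such keys, hence $O((b+\log n)\log n)$ bit operations rather than the $O(b+\log n)$ you state. Plugged into your $O(n^3)\cdot O(n)$ accounting this overshoots the target by a $\log n$ factor. The fix is trivial: replace the heap by an unsorted array of at most $n$ pending events (one per vehicle) and find the next event by a linear scan; extraction then costs $O(n(b+\log n))$, which your per-event budget already covers.
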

\begin{proof}
We may assume without loss of generality that the maximum speed is 1 unit per second.  Recall the notion of significant events from Lemma~\ref{lemma:dfull_validity}.  Simulation of the system using $P$ is a simple discrete time event simulation in which we advance from one significant event to the next.  Observe that given suitable data structures, we can process each significant event in $O(n^2)$ time (ignoring numeric issues).  The issue that remains is the number of bits of precision needed to represent the times at which each significant event occurs.  Significant event times can be computed as follows:
\[t^+_{full}(i,j) = t^-_{full}(i,j) + length(v_i)\]

Let $\chi_{j'}$ denote the next intersection along the lane in which vehicle $v_i$ is moving.  The time at which $v_i$ hits intersection $\chi_{j'}$ is:
\[t^+_{full}(i,j) + dist(\chi_j, \chi_{j'})\]

At this time, the vehicle will either continue directly through the intersection, thus implying that the value of $t^-_{full}(i,j')$ is equal to the value above, or it will be forced to wait for some other significant event before it can move.  Observe, then, that each significant event time is the sum of the vehicle length and the distances between consecutive intersections.  If all coordinates are $b$-bits precise, then each time involves an $O(n)$-fold sum of $b$-bit numbers, or $O(b+\log n)$ bits, for a total of $O(n^2(b+\log n))$ bit operations.  Finally, each vehicle can pass through at most $O(n)$ intersections for a total of $O(n^2)$ significant events.  Thus, overall, the number of bit operations is less than or equal to the number of significant events, times the processing time for each, times the number of bits for each or $O(n^2n^2(b+\log n) = O(n^4(b+\log n))$.
\end{proof}

Given the Traffic Crossing Problem's demonstrated hardness and membership in $\NP$, we conclude that the problem is $\NP$-complete.

\vspace*{-10pt}

\section{A Solution to the One-Sided Problem}\label{sec:one_sided_solution}
While the generalized Traffic Crossing Problem is $\NP$-complete, it is possible to solve a constrained version of the problem more efficiently.  The complexity of the generalized Traffic Crossing Problem arises from the interplay between horizontal and vertical vehicles, which results in a complex cascade of constraints. To break this interdependency, the vertically traveling vehicles are given priority, allowing them to continue through the intersection at a fixed speed.  In this variant, called the \emph{one-sided problem}, the horizontal vehicles can plan their motion with complete information and without fear of complex constraint chains.  

First, we assume that the vertically traveling vehicles are invariant and are all traveling at the same speed, $s_n$.  With vertical vehicle motion now fixed, there is no way for horizontal vehicles to affect each other and movement profiles for each can be found in isolation from the others.   Finally, we assume that all vehicles are of length $l$ and in general position.

For the purpose of illustration we begin with a simplified version of the problem and then, over the course of three cases, relax the restrictions until we are left with a solution to the original problem under the fixed, one-sided policy described above.  These three cases are:

\begin{description}
\item[Intersection Between One-Way Highways]  \hspace{0px}
	\begin{itemize}
		\item Vertical vehicles approach from the North only.
		\item Horizontal vehicles approach from the West only.
		\item Each vehicle is in its own lane (i.e., no two vehicles are collinear).
	\end{itemize}
\item[Intersection Between a One-Way Street and a Two-Way Highway] \hspace{0px}
	\begin{itemize}
		\item Vertical vehicles approach from the North and the South.
		\item Horizontal vehicles approach from the West only.
		\item There is a single horizontal lane (i.e., all horizontal vehicles are collinear) and one or more vertical lanes.
	\end{itemize}
\item[Intersection Between Two-Way Highways] \hspace{0px}
	\begin{itemize}
		\item Vertical vehicles approach from the North and the South.
		\item Horizontal vehicles approach from the West and the East.
		\item There are $k$ horizontal lanes, one or more vertical lanes, and vehicles may be collinear.
	\end{itemize}
\end{description}

\vspace*{-15pt}

\subsection{Intersection Between One-Way Highways}
Formally, vehicles from the North are in the subset $N \subset V$ and their direction of travel is $d_n = (0, -1)$, where as vehicles from the West are in the subset $W \subset V$ with a direction of travel of $d_w = (1, 0)$.  Again, our only task is to find valid speed profiles for vehicles coming from the West. 

To begin, the problem space is transformed so that the vehicles in $W$ are represented as points rather than line segments.  This makes movement planning simpler while maintaining the geometric properties of the original space.  Every vehicle in $W$ is contracted from left to right, until it is reduced to its leading point.  In response, the vehicles in $N$ are expanded, transforming each into a square obstacle with sides of length $l$ (see Fig. \ref{fig:space_transform}) and with their left edges coincident with the original line segments.  
	
\begin{figure}[htbp]

\vspace*{-10pt}

\centerline{\includegraphics[scale=.25]{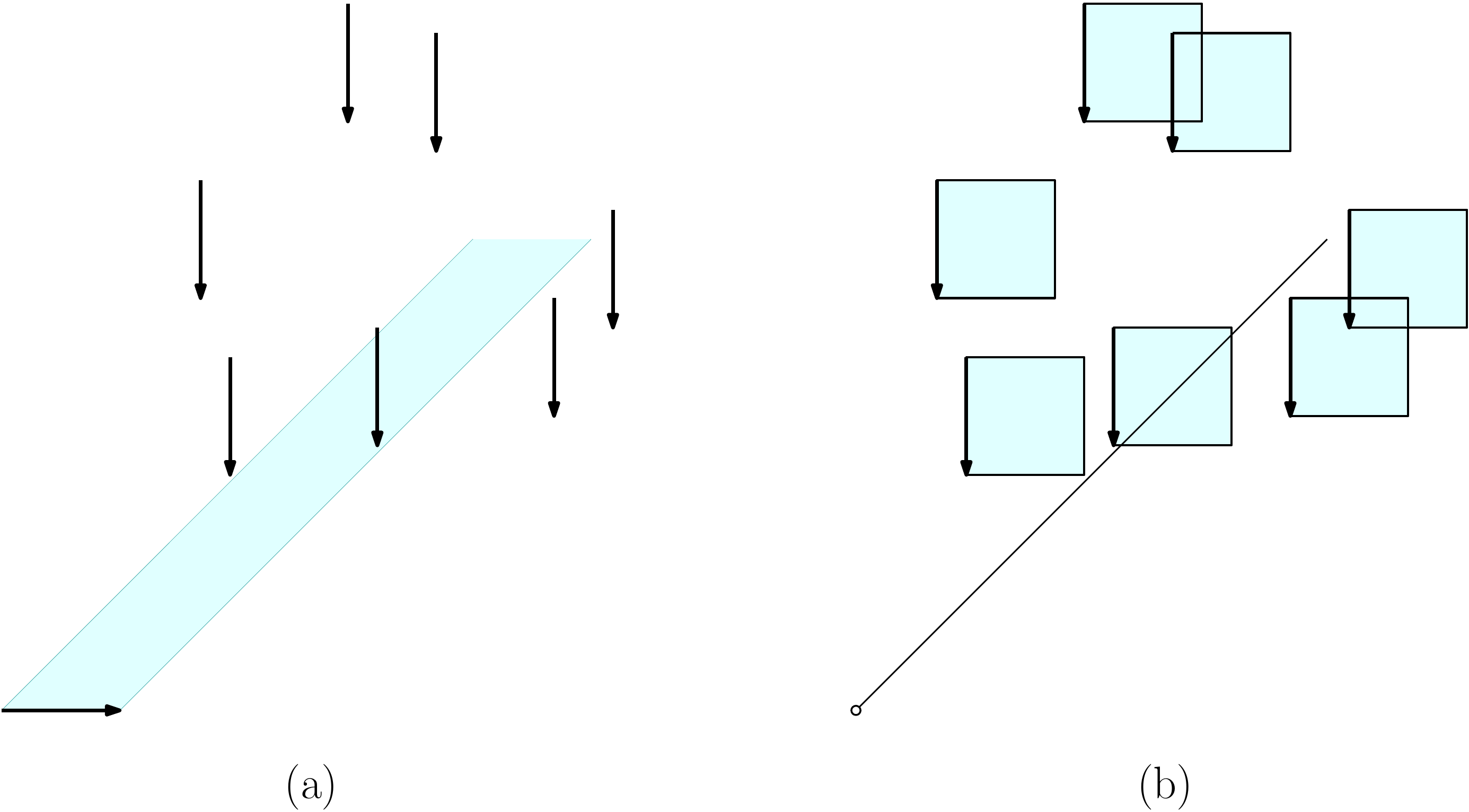}}
\caption{(a) A random traffic crossing problem as viewed from a single active vehicle. (b) The resulting space after the point transformation.}
\label{fig:space_transform}

\vspace*{-10pt}

\end{figure}

Given the global speed limit $\speedlimit$, there are regions in front of each obstacle in which a collision is inevitable (this concept is similar to the obstacle avoidance work done in \cite{petti_safe_2005}).  These triangular zones (referred to as \emph{collision zones}) are based on the speed constraints of the vehicles and are formed by a downward extension of the leading edge of each obstacle.  The leftmost point of this edge is extended vertically and the rightmost point is extended at a slope derived from the ratio between $\speedlimit$ and the obstacle speed.  As one last concession to clarity, we scale the axes of our problem space so that this ratio becomes $1$.  Formally, a collision zone $Z_O$ for the obstacle $O$ is the set of all points $p$, such that there is no path originating at $p$ with a piecewise slope in the interval $[1, \infty]$ that does not intersect $O$.

Expanding the vehicles in $N$ into rectangular obstacles may cause some to overlap, producing larger obstacles and, consequently, larger collision zones.  This merger and generation of collision zones is done through a standard sweep line algorithm and occurs in $O(n \log n)$ steps, where $n$ is the number of obstacles, as described below.

\vspace*{-5pt}

\subsubsection{Merging Obstacles and Growing Collision Zones}
This process is done using a horizontal sweep line moving from top to bottom.  While the following is a relatively standard application of a sweep line algorithm, it is included for the sake of completeness.  First, the event list is populated with the horizontal edges of every obstacle, in top-to-bottom order, requiring $O(n \log n)$ time for $O(n)$ obstacles.  The sweep line status stores a set of intervals representing the interiors of disallowed regions (e.g., the inside of an obstacle or collision zone).  Each interval holds three pieces of information: the location of its left edge, a sorted list of the right edges of any obstacles within the interval, and the slopes of these right edges.  These slopes will be either infinite (i.e., the edges are vertical) or will have a slope of $1$.

In addition to horizontal edge positions, the event list must keep track of three other events which deal with the termination of the sloped edges of the collision zones.  These edges begin at the bottom right edge of an obstacle and terminate in one of three ways: against the top of another obstacle, against the right edge of another obstacle, or by reaching the left edge of an interval.  The first case is already in the event list as the top edges were added at the start of this process.  The remaining two cases are added as the sweep line progresses through the obstacles. 

\begin{figure}[htbp]
\centerline{\includegraphics[scale=.32]{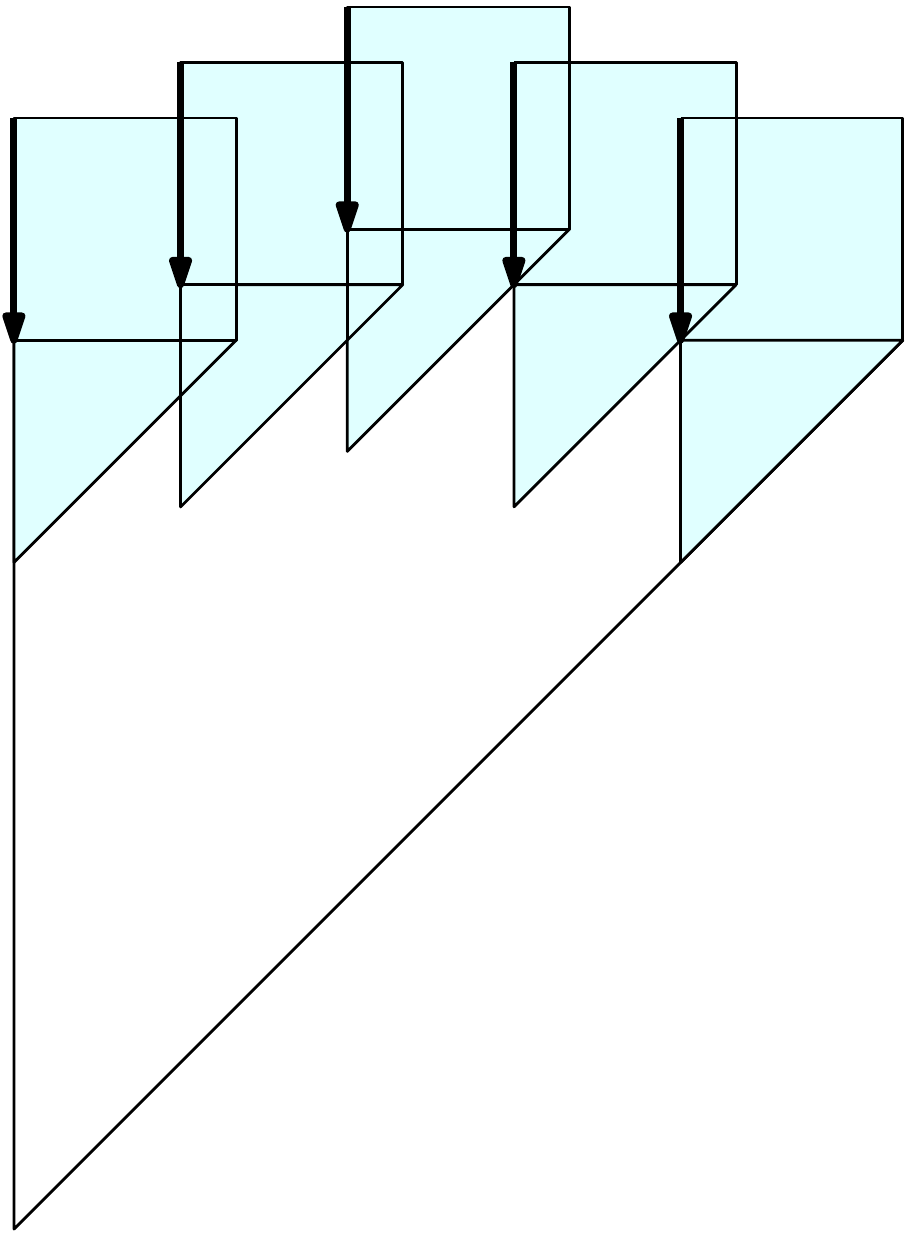}}
\caption{An example illustrating the need to redefine collision zones when obstacles overlap. Here, the collision zones for each individual obstacle (represented as shaded triangles) are insufficient as the merger creates a larger area that vehicles must avoid (seen here as the unfilled triangle).}
\label{fig:obstacle_merge}
\end{figure}

So, when merging obstacles, the sweep line must handle the following events:
\begin{description}
\item[Top Edge Encounter - ] When the sweep line encounters the top edge of an obstacle it must either create a new interval or add this obstacle to an existing interval.  The creation of a new interval is straightforward as the endpoints of the edge are all that need to be added (see Fig.~\ref{fig:sweep_line}(a)). 

If the top edge intersects an existing interval, however, there is a little more work to be done.  First, if the leftmost point of the edge does not lie within the interval then it becomes the new leftmost edge of the interval (see Fig.~\ref{fig:sweep_line}(b)).  If the sloped edge of a collision zone has already formed for this contiguous block of obstacles (see \textbf{Bottom Edge Encounter} for a description of how these form), then the termination point of the sloped edge may need to be updated to account for a shift in the leftmost edge. 

Second, the rightmost point of the encountered edge is inserted into the list of right edges in left-right order.  The new edge may become the new rightmost edge and if the previous rightmost edge was sloped then it is removed from the edge list. For example, in Fig.~\ref{fig:sweep_line}(d) this has just occurred within the set of obstacles on the left.  If the newly added right edge does not replace the sloped edge and the sloped edge intersects the newly added edge, the point at which they intersect is added to the list of events to be processed (this occurs in Fig.~\ref{fig:sweep_line}(c) on the right side).  If there is an existing event in the event list for the sloped edges intersection with another obstacle, it must be deleted as the addition of the newest obstacle will truncate the edge before it reaches that event. 

\item[Bottom Edge Encounter - ]  When the bottom edge of an obstacle is encountered, the obstacle's right edge is found in the interval's edge list.  If it is not the rightmost, it is removed from the edge list (this occurs in Fig.~\ref{fig:sweep_line}(e) on the left, denoted by the grey slope arrow).  If the edge to be removed is the rightmost edge in the list, rather than removing it, its slope is changed to that of the ratio between the vehicles' speed limit and the speed of the vehicles, $\frac{\speedlimit}{s_n}$.  Next, the termination point for this sloped edge is added to the event list.  This is the point at which the leftmost edge of the interval and the sloped edge meet. This point is illustrated in Fig.~\ref{fig:sweep_line}(e), though it was added when the previous bottom edge was processed.  As noted above, this event may need to be updated if a top edge is encountered that moves the leftmost edge of this interval.

\item[Sloped Edge Termination -] When the sloped edge terminates against a right edge, it is deleted from the edge list.  This makes the edge with which it collided the new rightmost edge.

\item[Interval Termination - ] In this case, the sloped edge of the collision zone has met the leftmost edge of the interval.  When this is the case, the interval has finally closed and can be removed from the sweep line status (see Fig.~\ref{fig:sweep_line}(f)).
\end{description}

\begin{figure}[htbp]
\centerline{\includegraphics[scale=.60]{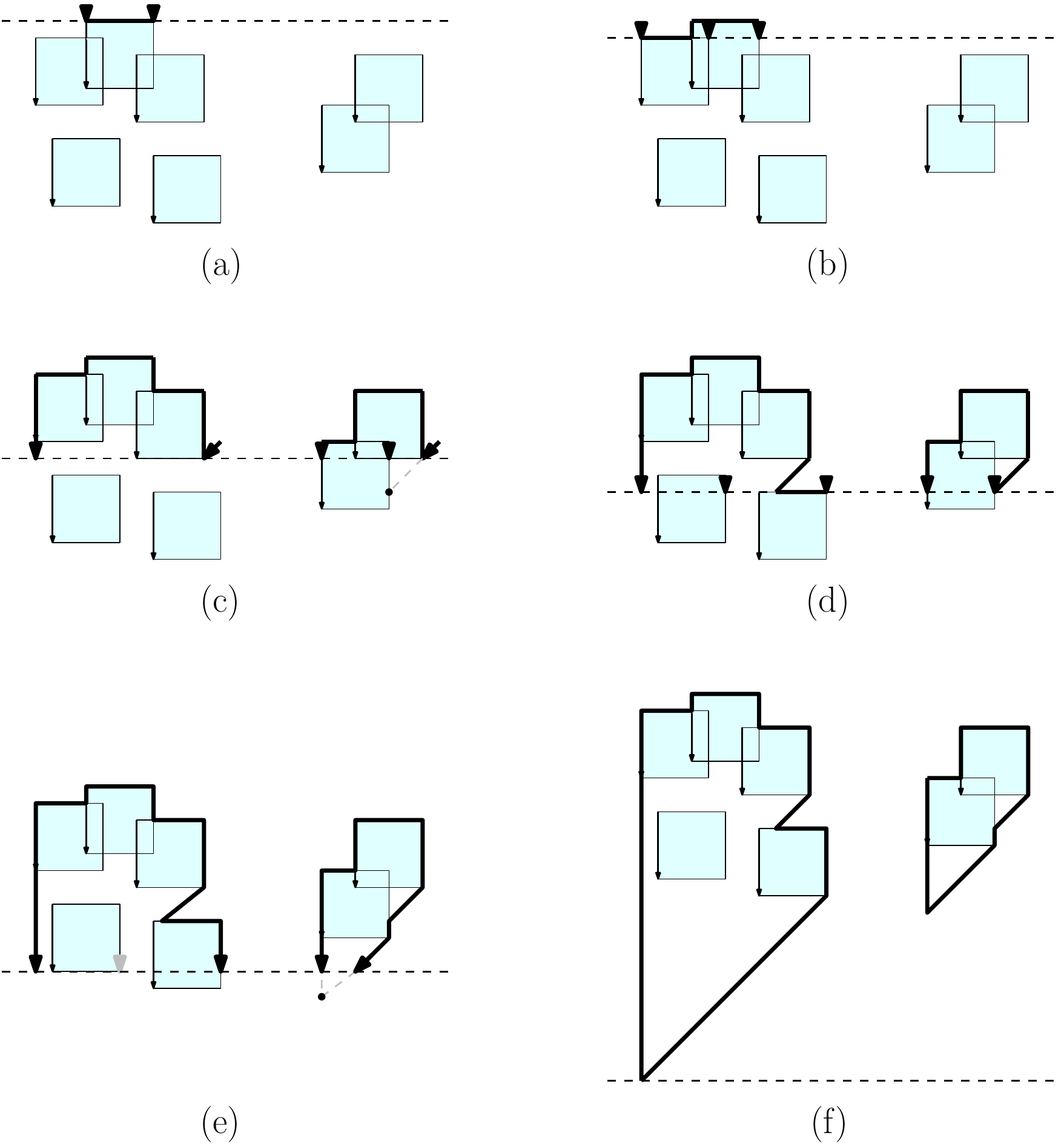}}
\caption{A sweep line merging obstacles and creating collision zones.  Note: these illustrations do not show every step in the sweep line process.  Some are skipped in order to save space.  (a) Encountering the first top edge and adding an interval to the sweep line status.  (b) Encountering the next top edge, which increases the interval size.  (c) Encountering bottom edges changes the rightmost slope of the collision zone.  Notice on the right that an internal right edge is stored in the status. (d) Sloped edges the top of an unprocessed obstacle and the rightmost edge of an obstacle in an interval. (e) Encountering the bottom edge of an internal obstacle.  It's rightmost edge is deleted from the sweep line status. (f) Reaching the point of convergence for a collision zone.  The interval is deleted from the sweep line status.}
\label{fig:sweep_line}
\end{figure}

The initial population of the event list occurs in $O(n \log n)$. As the sweep line progresses through the obstacle space, it adds and removes the right edges of obstacles to the appropriate intervals.  These lists of edges are built incrementally in sorted order, requiring only $O(\log n)$ time.  Finally, as there is a constant number of possible events per obstacle (a single top edge, a single bottom edge, and a single termination of its sloped edge), there are at most $O(n)$ events to be processed. Thus, the sweep line processes the obstacle space in $O(n \log n)$ time.

\vspace*{-10pt}

\subsubsection{Movement Planning}\label{sec:movement_planning}
Once the obstacles have been merged and grown appropriately, we need to find speed profiles for each vehicle that allow them to safely cross the intersection.  This is done using the same obstacle filled space we have been working with thus far, though with a small change in perspective.  Currently, vehicles are only allowed horizontal movement and obstacles only move vertically.  Instead, we will treat the obstacles as static objects and add a vertical component to the vehicles equal to the obstacles' speed.  So, for example, a vehicle moving at the maximum speed will actually follow a path with a slope of $\frac{s_n}{\speedlimit}$ whereas a stationary vehicle will travel vertically.  Again, we have scaled our axes so that this ratio is $1$, imposing on the vehicle monotonic movement with a slope in the interval $[1, \infty]$.  With this understanding, we can now easily find a path through the obstacles while obeying the speed constraints of the vehicles. 

The vehicle will travel at its minimum slope (equivalent to its maximum speed) until it either reaches its goal position or encounters an obstacle.  If an obstacle is encountered, the vehicle travels vertically until it is no longer blocked (this vertical motion corresponds to stopping and waiting for the obstacle to pass).  Once this occurs, the vehicle continues on its way at its maximum speed until it has covered the distance to its goal (measured horizontally, as vertical movement no longer represents spatial translation).

The path created by the above behavior can be efficiently found through the use of another line sweep.  First, notice that every edge that is locally to the left of an obstacle (referred to as a \emph{left edge}) is a vertical line segment.  Since the vehicles move monotonically, they will only ever encounter an obstacle at one of these left edges.  So, to find a path for each vehicle traveling at speed $\speedlimit$, a sweep line perpendicular to the vehicles' trajectories is created and swept from the upper-right to the lower-left (see Fig.~\ref{fig:pathfinding_line_sweep}(a)).  This perpendicular line's status will maintain a list of obstacle occlusions with respect to the vehicles' direction of travel by adding an interval for each obstacle as it is encountered during the sweep.  More specifically, it stores the point where the sweep line first encountered the obstacle's left edge, the horizontal position of the left edge, and the point where the sweep line last encountered the edge.

\begin{figure}[htbp]
\centerline{\includegraphics[scale=.60]{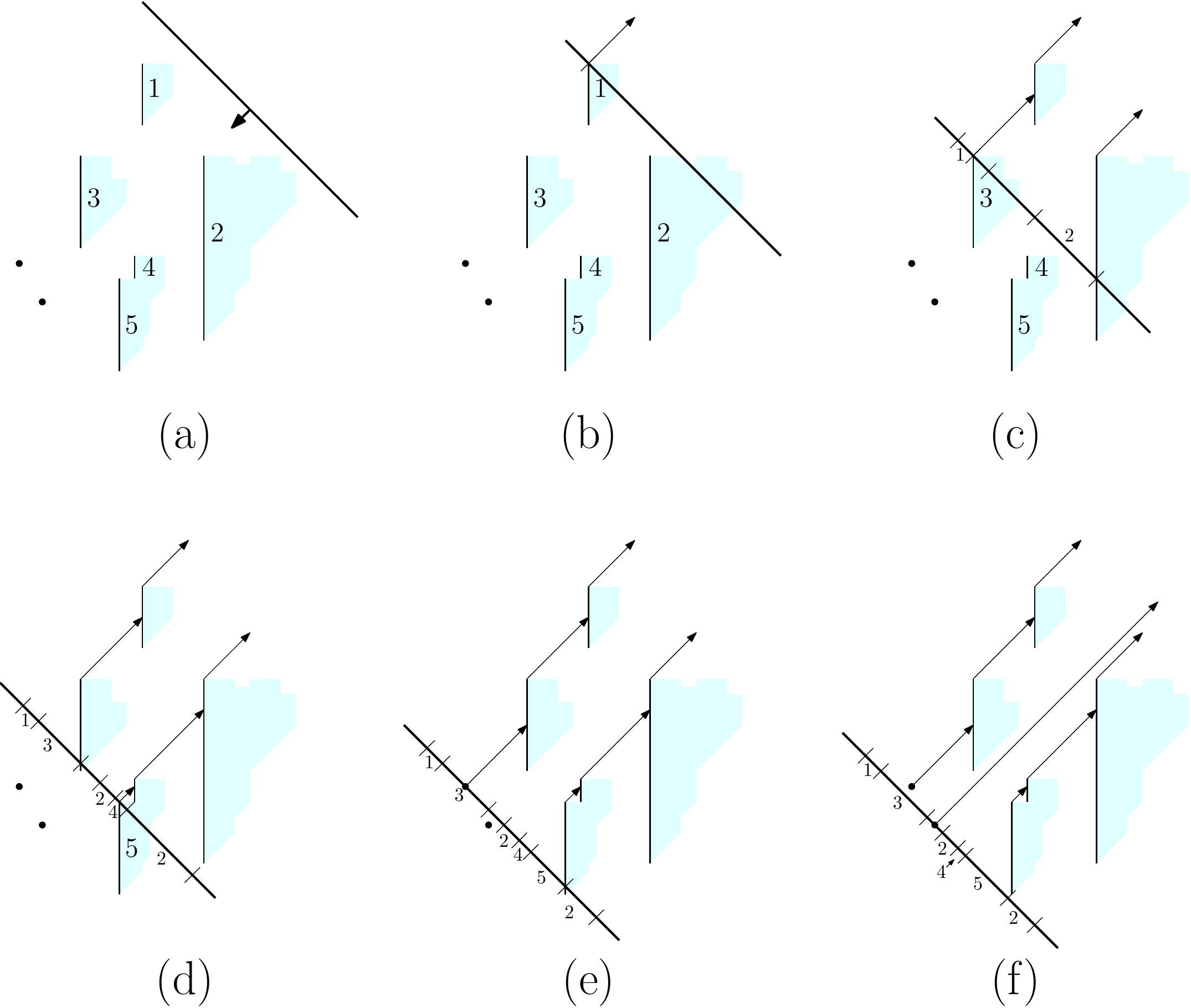}}
\caption{(a) A sweep line for path finding, traveling perpendicular to the direction of travel of a vehicle moving at speed $\speedlimit$. (b) The sweep line encountering vertical edge $1$.  As there is no interval on the sweep line where it occurs, this line's path goes directly to the goal at speed $\speedlimit$.  Edges to the root of the path tree are represented by arrows going off to infinity.  (c) The sweep line encountering edge $3$.  This encounter lies in the interval for edge $1$.  (d)  Encountering edge $5$, creating a path from it to edge $4$.  (e) Encountering the first vehicle, which lies in the interval for edge $3$.  Thus, the final path for the vehicle is to travel at maximum speed until it reaches edge $3$, wait for the edge to pass, travel to $1$, wait, and finally travel to the goal position.  (f) The sweep line encountering the second vehicle at an open interval.  Thus, this vehicle can travel at speed $\speedlimit$ until it reaches its goal position.}
\label{fig:pathfinding_line_sweep}
\end{figure}

During the sweep, a tree is built representing a set of all paths through the obstacle field that encounter an obstacle. Vehicles will either encounter an obstacle in the tree or are free to travel at full speed without collision until their goal is reached.  Each obstacle is a vertex in the tree and edges represent the path taken after encountering this obstacle.  The edge will either lead to an encounter with another obstacle or will lead to the root.  The root is the only vertex which does not represent an obstacle but instead signifies an open path to the goal.

The event list for the sweep line is populated with the upper and lower ends of each left edge.  Whenever an upper end is encountered, it is inserted into the list of intervals in the sweep line status and the obstacle is inserted into the path tree.  If the insertion point does not lie within an existing interval, then an edge between the obstacle and the root is created (see Fig.~\ref{fig:pathfinding_line_sweep}(b)).  If the insertion point lies within an interval, that interval is split by the inserted point and an edge between the new obstacle and the interval's obstacle is added (see Fig.~\ref{fig:pathfinding_line_sweep}(c)).

Whenever the lower end point of an obstacle's left edge is encountered, the interval ending for that obstacle is added to the list.  If an event occurs before an interval has completed, the intervals intermediate size can be determined using the position of the sweep line, the start point of the interval, and the position of the obstacle's left edge (This occurs in Fig.~\ref{fig:pathfinding_line_sweep} between (c) and (d)).  Finally, when a vehicle is encountered, its position along the sweep line determines its path.  If it is an interval, then its path begins by traveling to the associated obstacle and, using the path tree, travels to that obstacle's parent obstacle, repeating this process until it has reached its goal position.

So, in the example in Fig.~\ref{fig:pathfinding_line_sweep}(e), the upper vehicle encounters obstacle $3$, waits for it to pass (i.e., travels vertically till the end is reached), moves at the maximum speed until it encounters obstacle $1$, then continues on until it reaches its goal position.  The lower vehicle, having been inserted into the interval list in between intervals, is free to travel at the maximum speed until its goal position is reached (see Fig.~\ref{fig:pathfinding_line_sweep}(f)).

\vspace*{-10pt}

\subsection{Intersection Between a One-Way Street and a Two-Way Highway}
In this case, vertical vehicles approach from the North and the South while horizontal vehicles travel in a single lane.

To account for the bidirectional vertical vehicles we fold the space along the horizontal lane.  This rotates the northbound traffic to an equivalent southbound set of vehicles (see Fig.~\ref{fig:bidirectional_obstacle_folding}).  This only requires a $O(n)$ transformation.  Using the plane sweep algorithm above yields a combined obstacle space.

\begin{figure}[htbp]
\vspace*{-10pt}

\centerline{\includegraphics[scale=.40]{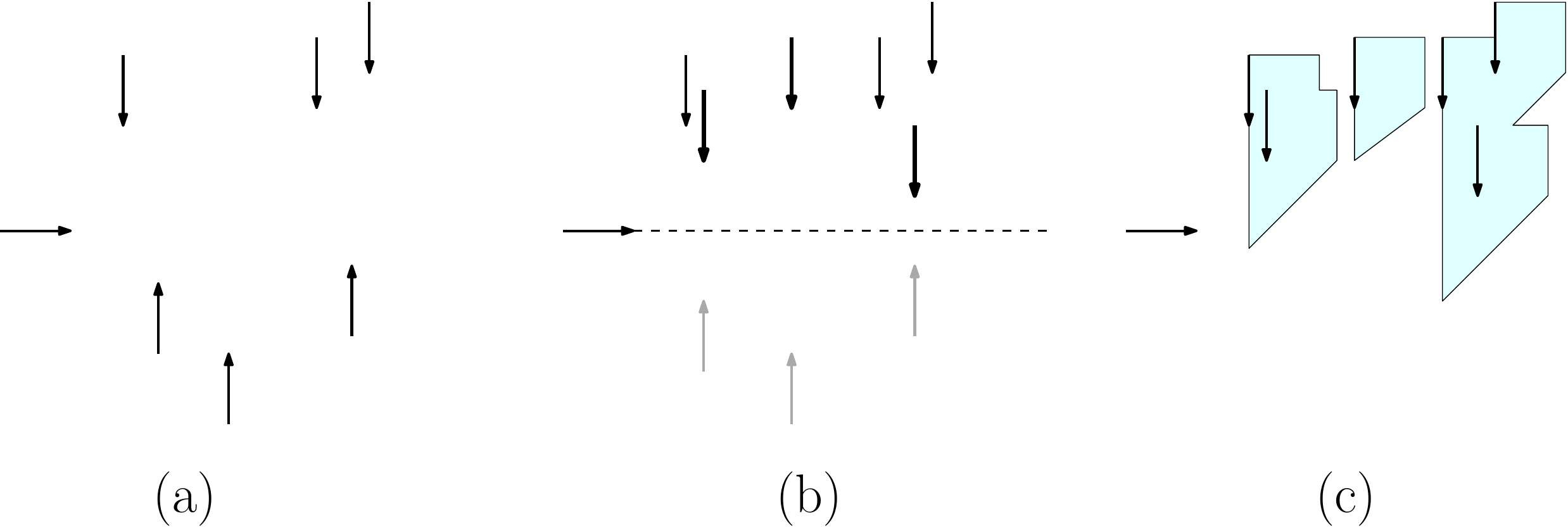}}

\vspace*{-5pt}

\caption{(a) An example of bidirectional cross-traffic. (b) To account for how these vehicles interact when they reach a horizontal lane, we can fold the space along the lane, rotating one set of vehicles about it. (c) Then, we run the same space transformation and obstacle merger detailed above.}
\label{fig:bidirectional_obstacle_folding}

\vspace*{-10pt}

\end{figure}

Finally, we must prevent the vehicles from rear-ending each other.  Once the lead vehicle has found a motion plan through the obstacles, it creates a new set of constraints for the vehicles behind it.  The monotonic path of the lead vehicle is stored in a binary search tree, allowing for easy collision queries.  

To begin with, the lead vehicle's path needs to be added to the tree.  However, because the lead vehicle is represented by a single point coincident with the front of the vehicle, the path we are storing needs to be shifted leftward by an amount equal to the vehicle's length (see Fig.~\ref{fig:same_lane_pathfinding}(b)).  The next vehicle in the lane must not collide with this newly created boundary.  

In the simplest case, the trailing vehicle can simply adopt the same movement policy as prescribed by the algorithm in Section~\ref{sec:movement_planning}.  However, this solution needs to be modified in the following two cases:  (i) when the lead vehicle's shifted path pierces an obstacle, closing off any space through which the trailing vehicle could follow or (ii) when the trailing vehicle would collide with the rear of the lead vehicle.

In case (i), the concern is the creation of overhangs in the obstacle space.  In Section~\ref{sec:one_sided_solution}, overhangs were eliminated through the merger of obstacles.  In this case, however, vertical portions of a lead vehicle's path may pierce an obstacle, recreating such overhangs.  While the merging algorithm could be used to eliminate them once again, doing so for each vehicle is too costly.

Instead, we keep track of which obstacle lies directly below the left edge of each obstacle (this can be found during the obstacle merge plane sweep or through ray shooting).  If an obstacle is pierced by the leading vehicle's path then the space below it can no longer be part of a viable path, as any vehicle entering this space will become trapped.  If another obstacle lies directly below the first, it will need to close off the space below itself as well.  It is possible for this operation to cascade down through multiple obstacles, but once an obstacle has closed off the space below it, it will never need to make this update again.  Thus, this update only requires $O(n)$ time.  This cascading path is then added to the boundary created above, pruning a portion of the search tree and replacing it with the new path (see Fig.~\ref{fig:same_lane_pathfinding}(c)).

For case (ii), the trailing vehicle must check for a collision with the boundary when traveling at full speed (i.e., any time it leaves the upper corner of an obstacle and travels at slope $1$).  Given the binary search tree in which this boundary is stored, we can query for collisions in $O(\log n)$ time for each obstacle the trailing vehicle encounters.  If a collision occurs, the trailing vehicle simply follows the boundary from that point forward.  Because overhangs were eliminated above, this leads to the fastest collision free motion plan for the trailing vehicle.  As each obstacle will be subsumed by the boundary once the trailing vehicle has determined its motion plan, this query will only ever occur once for each obstacle, leading to a time complexity of $O(n \log n)$.

\begin{figure}[htbp]
\centering
\includegraphics[scale=.35]{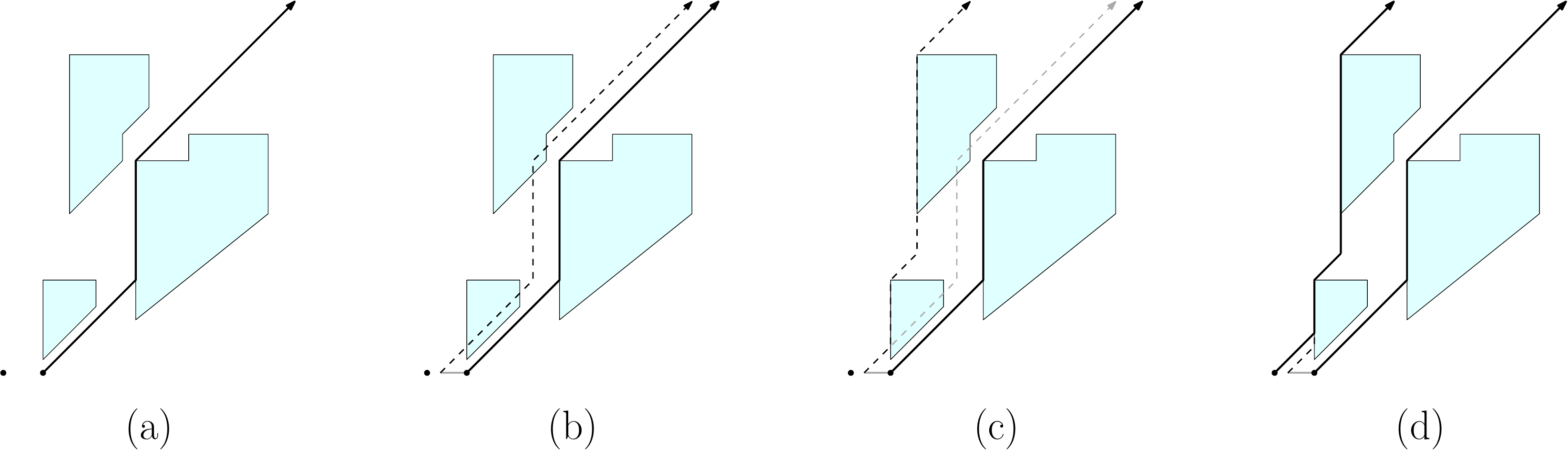}
\caption{An example of two vehicles in the same lane planning their motion through a set of obstacles. (a) As before, the lead vehicle projects its path and queries the first encountered obstacle for the remainder of the motion plan. (b) The path taken by the lead vehicle's rear end is used to create a boundary for the vehicles behind it. (c) This boundary creates overhangs which are eliminated by bending it around the obstacle(s) that create the overhang.  (d) The motion planning algorithm used in Section~\ref{sec:movement_planning} is used here, taking the new boundary into account.}
\label{fig:same_lane_pathfinding}
\end{figure}

In the end, we can still account for shared lanes without a running time greater than $O(n \log n)$.

\vspace*{-10pt}
	
\subsection{Intersection Between Two-Way Highways}
Finally, this case combines the two above, allowing for bidirectional movement horizontally and vertically, with multiple lanes along each axis.  

The vehicles approaching from the East are independent of those approaching from the West, presenting a symmetric problem that can be solved with the techniques discussed above. The addition of horizontal lanes, however, impacts the running time of the algorithm.  Previously, the bidirectional vertical traffic was accounted for by folding the obstacle space along a single horizontal lane, but in this case, because the position of the vertical vehicles relative to each other is different at any given lane, the folding must occur individually for each lane.  Thus, the algorithm runs in $O(kn \log n)$, for $k$ horizontal lanes.  In general, we assume that $k$ is a relatively small constant.

\vspace*{-10pt}

\section{Traffic Crossing in the Discrete Setting}\label{sec:discrete_setting}

\vspace*{-5pt}

In this section we consider the problem in a simple discrete setting, significantly simplifying the description of the algorithms and freeing us from a number of cumbersome continuous issues while still capturing the most salient elements of the original traffic-crossing problem. We assume that each vehicle occupies a point on the integer grid in the plane, $\ZZ^2$. Time advances discretely in unit increments, and at each time step a vehicle may either advance to the next grid point or remain where it is. A collision occurs if two vehicles occupy the same grid point.

The \emph{discrete Traffic Crossing Problem} is defined in much the same manner as in the continuous case. The problem is presented as a set $V$ of $n$ vehicles on the integer grid. Each vehicle $v_i$ is represented by its initial and goal positions $\initpos_i$ and $\goalpos_i$, respectively, both in $\ZZ^2$. Also given are a starting time $\inittime_i$ and deadline $\goaltime_i$, both in $\ZZ^+$ (where $\ZZ^+$ denotes the set of nonnegative integers). A vehicle's direction $d_i$ is a unit length vector directed from its initial position to its goal, which is either horizontal or vertical. Time proceeds in unit increments starting at zero. The motion of $v_i$ is specified as a function of time, $\delta_i(t) \in \{0,1\}$. Setting $\delta_i(t) = 0$ means that at time $t$ vehicle $i$ remains stationary, and $\delta_i(t) = 1$ means that it moves one unit in direction $d_i$. Thus, $v_i$'s position at time $t \ge 0$ is $p_i(t) = \initpos_i + d_i \sum_{x=0}^{t} \delta_i(x)$. 


Generalizing the problem definition from Section~\ref{sec:prob_def}, the objective is to compute a speed profile $D = \ang{\delta_1, \ldots, \delta_n}$ involving all the vehicles that specifies a collision-free motion of the vehicles in such a manner that each vehicle starts at its initial position and moves monotonically towards its goal, arriving there at or before its given deadline. Similar to road networks, we assume that along any horizontal or vertical grid line, the vehicle direction vectors are all the same.

\vspace*{-10pt}

\subsection{Maximum Delay}

\vspace*{-5pt}

Because we will be largely interested in establishing approximation bounds in this section, we will depart from the decision problem and consider a natural optimization problem instead, namely, minimizing the maximum delay experienced by any vehicle, defined formally as follows. For each vehicle we consider only its initial and goal positions, and let us assume that all vehicles share the same starting time at $t = 0$. A vehicle $v_i$ experiences a \emph{delay} at time $t$ if it does not move at this time (that is, $\delta_i(t) = 0$). The \emph{total delay} experienced by a vehicle is the total number of time instances where it experiences a delay until the end of the motion simulation. The \emph{maximum delay} of the system is the maximum total delay experienced by any vehicle. 

While we will omit a formal proof, it is not hard to demonstrate that the NP-hardness reduction of Section~\ref{sec:hardness} can be transformed to one showing that it is NP-hard to minimize maximum delay in the discrete setting. (Intuitively, the reason is that the reduction involves purely discrete quantities: integer vehicle coordinates and starting times, vehicles of unit length, and unit speed limit. The system described in the reduction is feasible if and only if the maximum delay is at most five time units.) However, it is interesting to note that the question of whether there exists a solution involving at most single unit delay can be solved efficiently. This is stated in the following result.

\begin{theorem} \label{thm:max-delay-one}
There exists an $O(n m)$ time algorithm that, given an instance of the discrete Traffic Crossing Problem with $n$ vehicles where each vehicle encounters at most $m$ intersections, determines whether there exists a solution with maximum delay of at most one time unit.
\end{theorem}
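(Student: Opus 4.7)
The plan is to reduce the decision question to a 2-SAT instance of size $O(nm)$ and then solve it in linear time via a standard 2-SAT algorithm. The key simplification is that since each vehicle is permitted at most one unit of delay over its entire journey, its schedule is characterized by a single binary datum at each intersection on its path: whether its (unique) delay has already been taken by the time it arrives there. Accordingly, for every vehicle $v_i$ and every intersection $\chi$ on its path, introduce a Boolean variable $x_{i,\chi}$ that is true iff $v_i$ has delayed before reaching $\chi$; this gives $O(nm)$ variables. Add the monotonicity clauses $x_{i,\chi_k} \Rightarrow x_{i,\chi_{k+1}}$ for each consecutive pair of intersections on $v_i$'s path, expressing that a delay, once taken, persists.

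For each intersection $\chi$ at which perpendicular vehicles $v_i$ and $v_j$ cross, compute the natural (undelayed) arrival times $a_i$ and $a_j$. Their actual arrivals are $a_i + x_{i,\chi}$ and $a_j + x_{j,\chi}$, treating the Boolean as $\{0,1\}$. A collision at $\chi$ occurs exactly when these coincide, so depending on whether $a_i - a_j$ is $-1$, $0$, or $+1$, add one or two 2-SAT clauses excluding the unsafe assignments of $(x_{i,\chi}, x_{j,\chi})$; when $|a_i - a_j| \ge 2$, no clause is needed. This contributes $O(nm)$ clauses. For rear-end safety, observe that two vehicles on the same lane with initial gap at least two cannot collide under unit delay, so only adjacent same-lane pairs need care. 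For each pair $(A,B)$ that starts one cell apart on the same lane (with $A$ trailing $B$) and each shared intersection $\chi$, add the implication $x_{B,\chi} \Rightarrow x_{A,\chi}$; in addition, for each pair of consecutive shared intersections $\chi_{k-1}, \chi_k$ lying exactly one cell apart on that lane, add the clause $x_{B,\chi_k} \Rightarrow x_{A,\chi_{k-1}}$. Since each vehicle has at most one adjacent same-lane neighbor, this adds $O(nm)$ further clauses. Solve the resulting 2-SAT instance in $O(nm)$ time; if satisfiable, reconstruct a plan by having each vehicle stall in the gap just preceding the first intersection at which its variable is true, coordinating the exact moment so that $\tau_A \le \tau_B$ for each adjacent same-lane pair.

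The main obstacle is justifying the sufficiency of the same-lane encoding, since a rear-end collision can in principle arise strictly between intersections, at the cell where a trailer catches up to a leader that has just stalled; the purely intersection-local implications are therefore not obviously enough. The argument is that for any satisfying 2-SAT assignment, the admissible ranges for the two stall times $\tau_A$ and $\tau_B$, each determined by the gap in which its state transitions, always admit a concrete choice with $\tau_A \le \tau_B$: the implications $x_{B,\chi} \Rightarrow x_{A,\chi}$ force the trailer's transition to occur no later (in intersection index) than the leader's, while the auxiliary distance-one clauses eliminate the sole remaining corner case in which both transitions would fall inside a common one-cell gap with no room to order the stall times. Completeness then follows because any feasible schedule of maximum delay one induces an assignment satisfying all of the above clauses, yielding the claimed $O(nm)$-time decision procedure.
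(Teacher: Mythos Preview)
Your proposal is correct and follows essentially the same approach as the paper: encode each vehicle's delayed/undelayed state at each intersection as a Boolean variable, add monotonicity implications along each path, add per-intersection collision clauses for crossing pairs and for adjacent same-lane pairs, and solve the resulting $O(nm)$-size 2-SAT instance in linear time. Your treatment is in fact more careful than the paper's on the rear-end reconstruction issue---the paper simply asserts that each vehicle delays at its first \textsc{true} intersection and that clause type~(iv) suffices, whereas you explicitly identify the between-intersections corner case and add the auxiliary distance-one clause plus the $\tau_A \le \tau_B$ ordering argument to close it.
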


\begin{proof}
We will show that the problem, can be reduced to an instance of 2-SAT in $O(n m)$ time. The result follows from the fact that 2-SAT can be solved in linear time \cite{aspvall:1979}. We use the easy observations that clauses of the form $(x \Rightarrow y)$ (implies) and $(x \oplus y)$ (exclusive-or) can both be expressed in 2-SAT form. 

Since the maximum delay is one unit, throughout the motion process each vehicle may be in one of two states, either having not experienced any delay up to that point or having experienced a single delay. For each vehicle $v_i$ and each intersection $k$ it passes through, we create a boolean variable $x_{i,k}$, whose value will be {\TRUE} to signify that this vehicle has experienced a delay on entry to intersection $k$, and otherwise its value is {\FALSE}.

We generate a 2-SAT instance with the following clauses, which together enforce the conditions of a solution with a maximum delay of one unit:
\begin{enumerate}
\item[(i)] For each vehicle $v_i$ and each pair of consecutive intersections $k$ and $k'$ that $v_i$ passes through, if $v_i$ is delayed on entering $k$, then it is still delayed on entering $k'$. Add the clause $(x_{i, k} \Rightarrow x_{i, k'})$.

\item[(ii)] For each intersection $k$, if two vehicles $v_i$ and $v_j$, one horizontal and one vertical, pass through $k$, and their starting positions are equidistant from $k$, they cannot both be in the same state when arriving at this intersection, for otherwise they would collide. Add the clause $(x_{i, k} \oplus x_{j, k})$.

\item[(iii)] For each intersection $k$, if two vehicles $v_i$ and $v_j$, one horizontal and one vertical, pass through $k$, and $v_i$ is one unit closer to $k$ than $v_j$, if $v_i$ delays on entering $k$, then $v_j$ must delay as well to avoid a collision. Add the clause $(x_{i, k} \Rightarrow x_{j, k})$.

\item[(iv)] For each pair of vehicles $v_i$ and $v_j$ in the same lane such that $v_i$'s initial position is one unit before $v_j$'s initial position, if $v_i$ delays then $v_j$ must delay as well to avoid rear-ending $v_i$. For all intersections $k$ through which these vehicles pass, add the clause $(x_{i,k} \Rightarrow x_{j,k})$.
\end{enumerate}

Because only one-unit delays are tolerated, all instances of potentially colliding vehicles are handled by cases~(ii), (iii), and~(iv). It is easy to verify that the resulting formula is satisfiable if and only if there exists a solution to the Traffic Crossing Problem with maximum delay of at most one time unit. In particular, each vehicle $v_i$ travels without delay until (if ever) encountering the first intersection $k$ such that $x_{i,k} = {\TRUE}$, at which time it delays one time unit. Conversely, in any valid solution, each vehicle $v_i$ will delay at most once, which is modeled by setting $x_{i,k} = {\TRUE}$ for all subsequent intersections $k$. Clauses of type~(i) prevent delayed vehicles from illegally changing state, clauses of types~(ii) and~(iii) prevent collisions from perpendicular lanes, and clauses of type~(iv) prevent collisions from vehicles in the same lane. The formula clearly involves $O(n m)$ variables and $O(n m)$ clauses and therefore the reduction runs in $O(n m)$ time.
\end{proof}

\vspace*{-10pt}

\subsection{The Parity Heuristic}

In the discrete setting it is possible to describe a simple common-sense heuristic. Intuitively, each intersection will alternate in allowing horizontal and vertical traffic to pass. Such a strategy might be far from optimal because each time a vehicle arrives at an intersection, it might suffer one more unit of delay. To address this, whenever a delay is imminent, we will choose which vehicle to delay in a manner that will avoid cross traffic at all future intersections. Define the \emph{parity} of a grid point $p = (p_x, p_y)$ to be $(p_x + p_y) \bmod 2$. Given a horizontally moving vehicle $v_i$ and a time $t$, we say that $v_i$ is \emph{on-parity} at $t$ if the parity of its position at time $t$ equals $t \bmod 2$. Otherwise, it is \emph{off-parity}. Vertically moving vehicles are just the opposite, being \emph{on-parity} if the parity of their position is \emph{not} equal to $t \bmod 2$. Observe that if two vehicles arrive at an intersection at the same time, one moving vertically and one horizontally, exactly one of them is on-parity. This vehicle is given the right of way, as summarized below.

\begin{description}
\item[Parity Heuristic:] If two vehicles are about to arrive at the same intersection at the same time $t$, the vehicle that is on-parity proceeds, and the other vehicle waits one time unit (after which it will be on-parity, and will proceed).
\end{description}

The parity heuristic has a number of appealing properties. First, once all the vehicles in the system are on-parity, every vehicle may proceed at full speed without the possibility of further collisions. Second, the heuristic is not (locally) wasteful in the sense that it does not introduce a delay into the system unless a collision is imminent. Finally, the rule is scalable to large traffic systems, since a traffic controller at an intersection need only know the current time and the vehicles that are about to enter the intersection. 

\vspace*{-10pt}

\subsection{Steady-State Analysis of The Parity Heuristic}

Delays may be much larger than a single time unit under the parity heuristic. (For example, a sequence of $k$ consecutive vehicles traveling horizontally that encounters a similar sequence of $k$ vertical vehicles will result in a cascade of delays, spreading each into an alternating sequence of length $2 k$.%
%
) This is not surprising given the very simple nature of the heuristic. It is not difficult to construct counterexamples in which the maximum delay of the parity heuristic is arbitrarily large relative to an optimal solution. We will show, however, that the parity heuristic is asymptotically optimal in a uniform, steady-state scenario (to be made precise below).

Consider a traffic crossing pattern on the grid. Let $m_x$ and $m_y$ denote the numbers of vertical and horizontal lanes, respectively. Each lane is assigned a direction arbitrarily (up or down for vertical lanes and left or right for horizontal). Let $R$ denote a $W \times W$ square region of the grid containing all the intersections (see Fig.~\ref{fig:parity_analysis}(a)). In order to study the behavior of the system in steady-state, we will imagine that $R$ is embedded on a torus, so that vehicles that leave $R$ on one side reappear instantly in the same lane on the other side (see Fig.~\ref{fig:parity_analysis}(b)). Equivalently, we can think of this as a system of infinite size by tiling the plane with identical copies (see Fig.~\ref{fig:parity_analysis}(c)). 
We assume that $W$ is even.

\begin{figure}[htbp]

\vspace*{-10pt}

\centerline{\includegraphics[scale=.40]{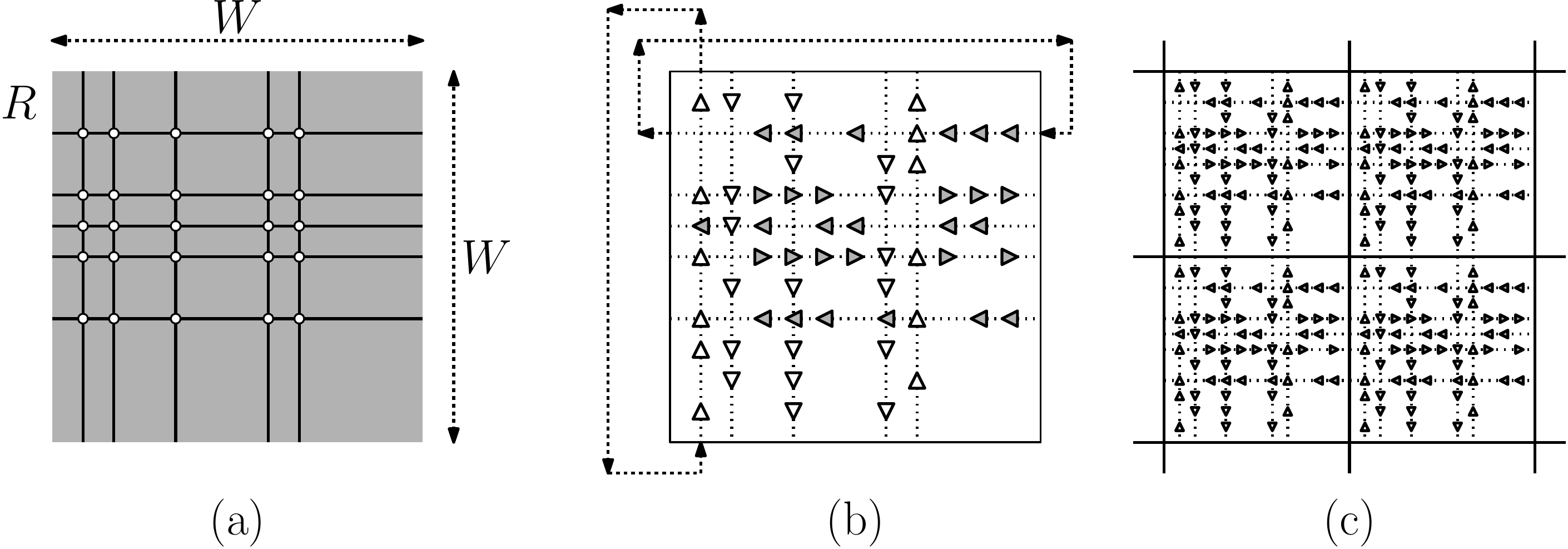}}
\caption{Analysis of the Parity Heuristic.}
\label{fig:parity_analysis}

\vspace*{-10pt}

\end{figure}

If the system is sufficiently dense, the maximum delay of the system will generally grow as a function of time. Given a scheduling algorithm and a discrete traffic crossing, define its \emph{delay rate} to be the maximum delay after $t$ time units divided by $t$. Define the \emph{asymptotic delay rate} to be the limit supremum of the delay rate for $t \rightarrow \infty$. Our objective is to show that, given a suitably uniform traffic crossing instance on the torus, the asymptotic delay rate of the parity algorithm is optimal.

We say that a traffic crossing on the torus is \emph{uniform} if every lane (within the square $R$) has an equal number of vehicles traveling on this lane. Letting $n'$ denote this quantity, the total number of vehicles in the system is $n = n'(m_x + m_y)$. (The total number of positions possible is $W(m_x + m_y) - m _x m_y$, and so $n' \le W - m_x m_y/(m_x + m_y)$.) The initial positions of the vehicles within each of the lanes is arbitrary. Let $p = n'/W$ denote the density of vehicles within each lane. Let $\rpar = \rpar(W,p,m_x,m_y)$ denote the worst-case asymptotic delay rate of the parity heuristic on any uniform discrete traffic crossing instance of the form described above, and let $\ropt = \ropt(W,p,m_x,m_y)$ denote the worst-case asymptotic delay for an optimum scheduler.

Our approach will be to relate the asymptotic performance of parity and the optimum to a parameter that describes the inherent denseness of the system. Define $\chi = \max(0, 2 p - 1)$ to be the \emph{congestion} of the system. Observe that $0 \le \chi \le 1$, where $\chi = 0$ means that the density is at most $1/2$ and $\chi = 1$ corresponds to placing vehicles at every available point on every lane (which is not really possible given that $n' < W$). To demonstrate that the parity heuristic is asymptotically optimal in this setting, it can be shown that $\rpar \le \chi/(1+\chi) \le \ropt$. This is a consequence of the following two lemmas, whose proofs are given below.

\begin{lemma} \label{parity-delay.lem}
Given any uniform traffic crossing instance on the torus with congestion $\chi$, $\rpar \le \chi/(1+\chi)$.
\end{lemma}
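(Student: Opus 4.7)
My plan is to bound $\rpar$ by combining a no-overtaking invariant with a throughput analysis at each intersection. The structural observation is that under parity, each intersection $I=(x_I,y_I)$ behaves as an alternating-slot resource: because a vehicle's parity status is preserved by motion and flipped by a unit delay, a horizontal vehicle can win a parity conflict at $I$ only at times $t \equiv x_I + y_I \pmod{2}$ (when horizontal vehicles at $I$ are on-parity), and vertical vehicles only at the complementary times. Consequently, over any window of $T$ time steps, against a stream of opposing demand at most $T/2 + O(1)$ horizontal and at most $T/2 + O(1)$ vertical vehicles can cross $I$. In the congested regime $\chi > 0$ the density forces opposing demand to be present at essentially every time step, so this slot budget is binding.

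Next I would use no-overtaking: within a single lane on the torus, parity's rear-end avoidance forbids swaps, so the cyclic vehicle order is preserved. Hence any two vehicles in the same lane travel identical distances up to an additive $O(W)$ term over $[0,T]$, and therefore share a common asymptotic delay rate $\alpha_L$. Uniformity of the crossing (equal $n' = pW$ in every lane, identical intersection pattern on the torus) then forces all horizontal lanes to share one rate $\alpha_H$ and all vertical lanes to share one rate $\alpha_V$; the symmetry of the parity rule between horizontal and vertical directions (swapping the roles of on- and off-parity when swapping lane types) further yields $\alpha_H = \alpha_V$ asymptotically. Let $\alpha$ denote this common rate; the lemma reduces to $\alpha \le \chi/(1+\chi)$.

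Finally I would translate the slot bound into the claimed inequality via flow conservation. For any horizontal lane $L$ with $n' = pW$ vehicles, the total number of crossings of any fixed intersection $I\in L$ in time $T$ equals $n'(T - \alpha T)/W = p(1-\alpha)T$. Combining with the slot budget $p(1-\alpha)T \le T/2 + O(1)$ gives only the lower-bound direction $\alpha \ge \chi/(1+\chi)$; the matching upper bound $\alpha \le \chi/(1+\chi)$, which is what the lemma actually asks for, requires showing that parity is \emph{work-conserving}, i.e., that asymptotically every horizontal on-parity slot at $I$ is in fact consumed by an arriving horizontal vehicle. The plan here is to argue that when $p > 1/2$, the no-overtaking invariant and the density together prevent the pre-intersection slot from being persistently vacant: the total gap-size around any lane is the invariant $(1-p)W$, and a transient bunching can only shift, not enlarge, this total, so the fraction of on-parity slots left empty is $o(1)$. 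This saturation is the main obstacle; it requires careful accounting because rear-end cascades can concentrate many simultaneous delays into a single time step, and one must show these cascades are themselves accounted for by previously consumed intersection slots and do not strictly increase the long-run unused capacity. The uncongested case $\chi = 0$ ($p \le 1/2$) is immediate: since $n' \le W/2$, an entire lane can be on-parity simultaneously, so after at most $O(1)$ initial parity-conflict delays each vehicle becomes and remains on-parity and incurs no further delays, yielding $\alpha = 0 = \chi/(1+\chi)$.
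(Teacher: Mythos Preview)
Your proposal is not a proof; the central step is missing. Your slot-budget / flow-conservation argument yields, as you yourself note, only the inequality $p(1-\alpha)\le 1/2$, i.e.\ $\alpha \ge \chi/(1+\chi)$. That is the content of the \emph{other} lemma (the lower bound on the optimum), not of this one. Everything you need to prove the present lemma is packed into what you call ``work-conservation,'' and for that you offer only a plan with an explicitly acknowledged obstacle (``This saturation is the main obstacle; it requires careful accounting\ldots''). A capacity argument can never by itself give an upper bound on the delay incurred by a specific policy; you must show the policy actually uses the capacity, and you have not done so. Several of your auxiliary claims are also unjustified or false as stated: uniformity only equalizes the vehicle count per lane, not the initial positions, so there is no symmetry forcing all horizontal lanes (let alone horizontal and vertical) to share a common asymptotic rate $\alpha$; and in the $\chi=0$ case your ``$O(1)$ delays per vehicle'' is too strong---cascades among $n'$ consecutive vehicles can produce $\Theta(n')$ delays for the trailing vehicle (the point is only that this is bounded in $t$, hence the asymptotic rate is zero).

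The paper avoids all of this by arguing directly about arrival times rather than rates. Fix an intersection $q$, unwrap the torus, and let $x_1<x_2<\cdots$ be the initial distances to $q$ of the vehicles in one incident lane. Pessimistically assume a vehicle may pass $q$ only when on-parity; then a short induction gives
\[
t_j \;\le\; \max_{1\le i\le j}\bigl(x_i + 1 + 2(j-i)\bigr),
\]
since either $v_j$ is unimpeded ($t_j\le x_j+1$) or it passes two units after $v_{j-1}$. Periodicity with period $W$ and $n'=pW$ vehicles per block lets you localize the maximizing $i$: when $2p-1\le 0$ the maximum is achieved within the same block, giving $t_j\le x_j+O(W)$; when $2p-1>0$ it is achieved in the first block, giving $t_j\le k'W\cdot 2p + O(W)$ where $k'=\lceil j/n'\rceil$. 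In either case $t_j\le k'W(1+\chi)+O(W)$, whence the delay rate $1-x_j/t_j\le 1-1/(1+\chi)+o(1)=\chi/(1+\chi)+o(1)$. This inductive arrival-time bound is the missing idea; it replaces your global work-conservation accounting by a purely local recurrence that automatically handles the cascades.
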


\begin{proof}
Consider intersection $q$ in the system and any lane passing through $q$. By unwrapping the torus within the plane, we can think of the vehicles moving towards $q$ as an infinite sequence of blocks that repeats with period $W$. For any $k \ge 1$, consider the vehicles in this lane whose initial positions are within distance $k W$ of $q$ and are directed towards $q$. There are $k n' = k W p$ such vehicles, organized into identical blocks of length $W$. For $1 \le j \le k n'$, let $x_j$ denote the distance from the initial position of the $j$th vehicle to $q$. 

To obtain an upper bound on the delay, we will allow each of these vehicles to pass through $q$ only if it is on-parity. (The parity algorithm may allow off-parity vehicles to pass through if there is no imminent collision, so our assumption results in the highest possible delay.) Let $t_j$ denote the time at which vehicle $j$ passes through $q$ according to the parity heuristic. 

We first establish the following bound on $t_j$:
\[
	t_j
		~ \le ~ \max_{1 \le i \le j} (x_i + 1 + 2(j-i)).
\]
This follows by a simple induction argument. If $j = 1$, then we have $t_j \le x_j + 1$, which matches the time for this vehicle to reach $q$ together with one optional unit of delay if it is out of parity. Otherwise, observe that if the delays prior to vehicle $j$ do not affect it, we have $t_j \le x_j + 1$. If they do, then vehicle $j$ will pass through $q$ two units after $t_{j-1}$, thus yielding
\begin{eqnarray*}
	t_j
		& \le & \max(x_j + 1, t_{j-1} + 2) \\
		& \le & \max(x_j + 1, \max_{1 \le i \le j-1} (x_i + 1 + 2(j-1-i)) + 2) \\
		&  =  & \max(x_j + 1, \max_{1 \le i \le j-1} (x_i + 1 + 2(j-i)))
		~  =  ~ \max_{1 \le i \le j} (x_i + 1 + 2(j-i)).
\end{eqnarray*}
While the maximum is taken over $j$ choices of $i$, we can simplify this due to the periodic nature of the system. Suppose that $i$ is the index that achieves the maximum in the definition of $t_j$. Let $i'$ denote the index of the corresponding vehicle in a block that is closer to $q$. That is, $i = i' + k' n'$ and $x_i = x_{i'} + k' W$, for some $k' \ge 1$. We have
\begin{eqnarray*}
	x_i + 1 + 2(j-i)
		& = & (x_{i'} + k' W) + 1 + 2(j - (i' + k' n')) \\
		& = & x_{i'} + 1 + 2(j-i') - k'(2n' - W) \\
		& = & (x_{i'} + 1 + 2(j-i')) - k' W (2 p - 1).
\end{eqnarray*}

Therefore, if $\chi = 0$ (meaning that $2 p - 1 \le 0$), 
we may assume that $i$ achieves its maximum value among the $n'$ vehicles immediately preceding $j$, that is $j - i \le n'$. (Since using values of $i'$ from earlier blocks can only decrease the value in the max.) Conversely, if $\chi > 0$ (meaning that $2 p - 1 > 0$), $i$ achieves its maximum among the $n'$ vehicles that are closest to $q$. Let $k' = \ceil{j/n'}$ denote the index of the block that contains $x_j$. We conclude that if $\chi = 0$, then since $x_i \le x_j$, we have
\[
	t_j 
		~ \le ~ x_j + 1 + 2 n'
		~ \le ~ k' W + 1 + 2 n'
		~ \le ~ k' W + 2 W.
\]
On the other hand, if $\chi > 0$, then $x_i \le W$, $j - i \le (k'-1) n'$ implying that
\[
	t_j
		~ \le ~ W + 1 + 2 k' n'
		~ \le ~ k' (2 n') + 2 W
		~  =  ~ k' W (2 p) + 2 W.
\]
Note that $2 p > 1$ if and only if $\chi > 0$. Therefore, we conclude that
\[
	t_j
		~ \le ~ k' W \cdot \max(1, 2 p) + 3 W
		~ \le ~ k' W (1 + \chi) + 3 W.
\]
Because vehicle $j$ is at distance $x_j$ from $q$ and every time unit beyond $x_j$ contributes its the delay, the delay is $t_j - x_j$. The delay rate is $(t_j - x_j)/t_j = 1 - x_j/t_j$. Since $x_j \ge k' W$, the delay rate for each vehicle $x_j$ is
\[
	1 - \frac{x_j}{t_j}
		~ \le ~ 1 - \frac{k' W}{k' W (1 + \chi) + 3 W}
		~  =  ~ 1 - \frac{1}{(1 + \chi) + 3/k'}.
\]
For the asymptotic delay rate we are interested in the limit as $k' \rightarrow \infty$, which is at most
\[
	1 - \frac{1}{1 + \chi}
		~ = ~ \frac{\chi}{1 + \chi},
\]
as desired.
\end{proof}

\begin{lemma} \label{opt-delay.lem}
Given any uniform traffic crossing instance on the torus with congestion $\chi$, $\ropt \ge \chi/(1+\chi)$.
\end{lemma}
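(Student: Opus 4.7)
The plan is to establish the lower bound via a throughput-counting argument at a single intersection, combined with an averaging step that converts a bound on aggregate delay into a bound on maximum delay. Fix any intersection $q$ on the torus and let $L_h, L_v$ denote the horizontal and vertical lanes through $q$, which together contain $2 n'$ distinct vehicles. The key observation is that since $q$ can be occupied by at most one vehicle per time step, the total distance these $2 n'$ vehicles can collectively traverse on their circular lanes in $T$ time steps is bounded, which in turn forces a lower bound on the total and hence maximum delay.

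For each $v \in L_h \cup L_v$ with initial distance $x_v \in [0, W)$ from $q$ along its direction of travel, after moving a total distance $d_v$ on its lane (of period $W$) the number of times $v$ has crossed $q$ is $\lfloor (d_v - x_v)/W \rfloor + 1$ when $d_v \ge x_v$ and $0$ otherwise, so in every case
\[
d_v \;\le\; x_v + W \cdot (\text{\# crossings of } q \text{ by } v).
\]
Summing over the $2 n'$ vehicles and using that $\sum_v x_v < 2 n' W$ (each $x_v < W$) together with the capacity bound $\sum_v (\text{\# crossings}) \le T$ (at most one crossing of $q$ per time step), I obtain $\sum_{v \in L_h \cup L_v} d_v \le 2 n' W + W T$. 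Because each vehicle's accumulated delay at time $T$ is $T - d_v$, the total delay of these $2 n'$ vehicles satisfies
\[
\sum_{v \in L_h \cup L_v} (T - d_v) \;\ge\; 2 n' T - 2 n' W - W T \;=\; (2 n' - W) T - 2 n' W.
\]

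When $\chi > 0$ we have $2 n' - W = W \chi$, so dividing by $2 n'$ shows that the average delay in $L_h \cup L_v$ is at least $\chi T/(1+\chi) - W$, which in particular bounds the system-wide maximum delay from below. Dividing by $T$ and taking $\limsup_{T \to \infty}$ yields $\ropt \ge \chi/(1+\chi)$; the case $\chi = 0$ is immediate since delays are nonnegative. The main subtlety is the additive $O(W)$ term that arises because vehicles do not all begin exactly at $q$, but this correction is bounded uniformly in $T$ and so vanishes in the asymptotic delay rate. Note that the argument relies only on the single-intersection capacity and applies to every scheduler and every uniform initial configuration, which is stronger than needed for a worst-case bound.
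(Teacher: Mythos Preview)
Your proof is correct. Both your argument and the paper's rest on the same core observation---the single-intersection capacity constraint that at most one vehicle can occupy $q$ per time step---but the bookkeeping is genuinely different. The paper unwraps the torus into an infinite line, considers the $2kn'$ vehicles within distance $kW$ of $q$, and lower-bounds the delay of the \emph{last} vehicle to cross; to turn this into a delay-rate bound it then needs an upper bound on the crossing time $T$, which it imports from Lemma~\ref{parity-delay.lem} (the parity upper bound). Your argument instead stays on the torus, counts total distance traveled against total crossings of $q$ over an arbitrary horizon $T$, and passes from total delay to maximum delay by averaging over the $2n'$ vehicles. This buys you two things: the proof is self-contained (no appeal to the parity analysis), and you obtain a pointwise lower bound on the maximum delay valid at every time $T$, not just at the special times when a block has finished crossing. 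The paper's route is slightly more direct in that it identifies a specific delayed vehicle rather than inferring one by averaging, but at the cost of the external dependence on Lemma~\ref{parity-delay.lem}.
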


\begin{proof}
Let $q$ be any intersection of the system. By unwrapping the torus within the plane, we can think of the vehicles moving towards $q$ as an infinite sequence of blocks that repeats with period $W$. For any $k \ge 1$, consider the vehicles on the two lanes incident to $q$ whose initial positions are within distance $k W$. There are $k n' = k W p$ vehicles in each lane. At most one vehicle can pass through $q$ at any time, therefore the total time $T$ for all the vehicles to travel through $q$ is at least $2 k n' = 2 k W p$.

As observed in the proof of Lemma~\ref{parity-delay.lem}, the delay experienced by any vehicle is the difference of its time to arrive at $q$ minus its distance from $q$. Therefore, at least one vehicle among this set (in particular, the last vehicle in one of the two lanes), experiences a delay of at least $\max(0, T - x)$, where $x$ is its distance from $q$. Since $x \le k W$, the maximum delay is at least
\[
	\max(0, 2 k W p - k W) 
		~ = ~ \max(0, k W (2 p - 1)) 
		~ = ~ k W \chi.
\]
By Lemma~\ref{parity-delay.lem} (where $j$ plays the role of the last vehicle in block $k$ to make it through $q$) we have $T \le k W (1 + \chi) + 3 W$ for the parity algorithm, and therefore it can be no higher for the optimum. The delay rate is the delay divided by the time, which is at least
\[
	\frac{k W \chi}{k W (1 + \chi) + 3 W}
		~ = ~ \frac{\chi}{1 + \chi + 3/k}.
\]
To obtain the asymptotic delay rate, we consider the limit as $k$ tends to infinity, which yields a lower bound on the asymptotic delay rate of at least $\chi/(1+\chi)$, as desired.
\end{proof}

While the proofs are somewhat technical, the intuition behind them is relatively straightforward. If $\chi = 0$, then while local delays may occur, there is sufficient capacity in the system for them to dissipate over time, and hence the asymptotic delay rate tends to zero as well. On the other hand, if $\chi > 0$, then due to uniformity and the cyclic nature of the system, delays will and must grow at a predictable rate. As an immediate consequence of the above lemmas, we have the following main result of this section.

\begin{theorem}
Given a uniform traffic crossing instance on the torus, the asymptotic delay rate of the parity heuristic is optimal.
\end{theorem}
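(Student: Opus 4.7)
The proof will be essentially a one-line corollary of the two lemmas established just above. The plan is to sandwich $\rpar$ and $\ropt$ between equal upper and lower bounds, both equal to $\chi/(1+\chi)$, forcing them to coincide.

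First, I would observe that by definition the optimum scheduler minimizes the worst-case asymptotic delay rate over all valid schedulers, while the parity heuristic is itself a valid scheduler. Therefore on every instance we have the trivial inequality $\ropt \le \rpar$. Next, I would invoke Lemma~\ref{parity-delay.lem} to obtain $\rpar \le \chi/(1+\chi)$, and then invoke Lemma~\ref{opt-delay.lem} to obtain $\ropt \ge \chi/(1+\chi)$. Chaining these together yields
\[
    \frac{\chi}{1+\chi} \;\le\; \ropt \;\le\; \rpar \;\le\; \frac{\chi}{1+\chi},
\]
so all inequalities are equalities and in particular $\rpar = \ropt$, which is exactly the claim that the parity heuristic attains the optimal asymptotic delay rate.

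There is essentially no obstacle here, since the hard work was already done in the two preceding lemmas. The only point worth being careful about is making sure the statement $\ropt \le \rpar$ is applied in the right direction: $\ropt$ is defined as the \emph{worst-case} asymptotic delay rate under the \emph{best} scheduler on uniform instances of the given parameters, so on the particular worst-case instance for parity the optimum can only do at least as well, which is the inequality we need. Once this is noted, the theorem follows immediately and no further calculation is required.
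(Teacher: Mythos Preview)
Your proposal is correct and matches the paper's own argument: the theorem is stated there as an immediate consequence of the two lemmas, yielding $\rpar \le \chi/(1+\chi) \le \ropt$, which together with the trivial $\ropt \le \rpar$ forces equality. Your explicit justification of the inequality $\ropt \le \rpar$ via the worst-case definitions is a welcome clarification but does not depart from the paper's approach.
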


\bibliographystyle{plain}
\bibliography{shortcuts,bibliography-squished}

\end{document}